 \theoremstyle{remark}
\theoremstyle{plain}
\newtheorem{definition}{Definition}
\newtheorem{lemma}{Lemma}
\newtheorem{proposition}{Proposition}
\newenvironment{proof}[1][Proof]{\noindent\textbf{#1} }{\ \rule{0.5em}{0.5em}}
\renewcommand{\epsilon}{\varepsilon}
\newcommand{\Agnts}{A}
\newcommand{\Infmd}{M}
\newcommand{\Uninfmd}{\Agnts \backslash \Infmd}
\tikzstyle{info}=[circle,thick,draw=black,fill=black!25,minimum size=4mm]
\tikzstyle{uninfo}=[circle,thick,draw=black,fill=white,minimum size=4mm]
\tikzstyle{inforecog}=[circle,line width=1mm,draw=black!50,fill=black!25,minimum size=4mm]
\tikzstyle{uninforecog}=[circle,line width=1mm,draw=black!50,fill=white,minimum size=4mm]
\tikzstyle{traded}=[draw, line width=1mm]
\tikzstyle{recog}=[draw=black!50, line width=1mm]
\newcommandx{\nageeb}[2][1=]{\todo[linecolor=blue,backgroundcolor=blue!25,bordercolor=blue,#1]{#2}}
\newcommandx{\erik}[2][1=]{\todo[linecolor=red,backgroundcolor=red!25,bordercolor=red,#1]{#2}}
\begin{document}

\begin{titlepage}

\title{Reselling Information\thanks{We thank Mariagiovanna Baccara, Matt Elliott, Ben Golub, Navin Kartik, Bobby Kleinberg, Mihai Manea, Arnold Polanski, Jim Rauch, Joel Sobel, Eduard Talamas, Venky Venkateswaran, Joel Watson, and various audiences for useful comments, with special thanks to Guido Menzio for particularly valuable suggestions.}}
\author{
S. Nageeb Ali\thanks{Department of Economics, Pennsylvania State University. Email: \href{mailto:nageeb@psu.edu}{nageeb@psu.edu}.}
\and
Ayal Chen-Zion\thanks{Convoy Inc. This work was conducted prior to joining Convoy and represents the author's views alone. Email: \href{mailto:ayal.chenzion@gmail.com}{ayal.chenzion@gmail.com}.}
\and 
Erik Lillethun\thanks{Department of Economics, Colgate University. Email: \href{mailto:elillethun@colgate.edu}{elillethun@colgate.edu}.}
%\date{October 31, 2016}
}
\maketitle

\begin{abstract}
\noindent 
Information can be simultaneously consumed, replicated, and sold to others. We study how resale affects a decentralized market for information. Even if the initial seller is an informational monopolist, she captures non-trivial rents from at most a single buyer in any Markovian equilibrium: her payoffs converge to $0$ as soon as a single buyer has bought information. By contrast, there exists a non-Markovian ``prepay equilibrium'' where payment is extracted from all buyers before the information good is released. By exploiting resale possibilities, this prepay equilibrium gives the seller as high a payoff as she would achieve if resale were prohibited. \vspace{0.4in}

\noindent JEL: C78, D85
% By contrast, exclusive and unlimited property rights lead to overinvestment in information acquisition. Socially efficient information acquisition emerges with property rights that have a limited duration.
\end{abstract}
\thispagestyle{empty} 

\end{titlepage}
\clearpage
% 	\setcounter{tocdepth}{2}
% 	\tableofcontents
% 	\thispagestyle{empty}
% 	\clearpage
\setcounter{page}{1}
%
%TODO:
%
%\begin{itemize}
%\item Prove existence (mixed agreement)
%\item Generalize surplus dividing rule
%\item Splitting and splicing the IG
%	\begin{itemize}
%	\item Profit max and efficiency in the limit
%	\item Inefficiency just outside the limit, especially diffusion. It seems like the IP regime would be better at diffusion than the no-IP regime, which is the opposite of the usual conclusion.
%	\item I think what we want is the ``secret sharing'' of Shamir and Blakley (there's a decent Wikipedia page). In our case, where we want the IG to be shared amongst everyone once they collect all of the pieces, there is a very simple and efficient way to do it. Simply encode the IG as a binary number, give all but one person a random binary number of the same length, and give the last person the bitwise XOR of the encoded IG and all of the random numbers. Then, the original IG is simply the bitwise XOR of all of the pieces.
%	\end{itemize}
%\end{itemize}

% \documentclass[main.tex]{subfiles}
% \begin{document}
% !TEX root = main.tex
\section{Introduction}
As emphasized by \citet{arrow1962economic} (and others before him), information is a difficult commodity to trade for at least two reasons. First, one may be unable to prove that one has valuable information without revealing it. Second, information may be replicated and re-sold once it has been purchased. Our interest is in that second problem: 
When a good is replicable, can a seller charge a high price for it?\footnote{To resolve the first problem, \cite{anton1994expropriation} offer a contracting solution where the information monopolist commits to sell information to competing buyers if a buyer steals her idea without payment; \cite{horner2016selling} offer a non-contractual solution that involves gradually selling information and collecting payments.}

We pose this question in a model of decentralized bargaining where a monopolist for information may have significant bargaining power, can personalize prices, negotiates bilaterally with buyers who face frictions in negotiations, and can coordinate behavior on her favorite equilibrium of this game. We find that nevertheless, she appropriates very little of the social surplus from selling information across all Markovian equilibria; a robust upper-bound for her payoff is the value of information of a \emph{single} buyer. This negative result suggests that without protection from resale, sellers of information have little incentive to acquire information even if they have significant market power.

Why does the seller fail to appropriate much of the social surplus? The challenge is that of commitment: neither a seller nor buyers can commit to not sell information to third-parties in the future. Thus these agents anticipate future competition. This commitment problem endemic to a market for information is both dynamic and multilateral. Buyers are unwilling to buy information at a high price today if they can buy it cheaply in the future. The reason that they can do so is that those who have information cannot commit not to compete with each other. If agents were able to commit---either if the initial seller had exclusive rights to sell information (e.g., as in a copyright with a permanent duration), or if each party could commit to selling information only once---then the seller may appropriate a substantial fraction of social surplus. The information good monopolist is stymied not only by the buyers' ability to resell information but also her own ability to do so.

We find a non-contractual solution to this commitment problem that exploits resale. The seller uses a ``prepay equilibrium,'' where she holds up the release of the information good until she has collected prepayments (unconditional one-sided monetary transfers) from all but one buyer. Using these prepayments and delay, the seller achieves the same payoff that she would if resale were prohibited. In this equilibrium, buyers pay substantial amounts in prepayments because prepayment allows them to buy information in the future at very cheap prices; those cheap prices obtain precisely because of the resale problem alluded to above. Effectively, the seller exploits her and others' commitment problem to make it credible for buyers to pay before they receive information. Thus, within the commitment problem lies the seed to its solution.

%\vspace{-0.15in}
\paragraph{Framework and Results:} 
We study decentralized Nash bargaining in a market where a single seller is connected to multiple buyers. At time $0$, that single seller has information, which is valuable to each buyer. The information may correspond to knowing a payoff-relevant state of the world or a transferable skill; alternatively, it could be some replicable content that the seller has produced (and cannot protect with a copyright). The seller and each buyer face a trading opportunity at a random time. At a trading opportunity, the pair bilaterally transfer money and/or information (or do nothing) in a way that maximizes their joint continuation value. If information trade occurs, the buyer consumes the information and pays a price. But the challenge is that the buyer can resell this information to others. Thus, in selling information to a buyer, the seller creates her own competition.  

We study behavior at the frequent-offer limit, where the time between trading opportunities converges to $0$. \Cref{Proposition-TwoInformedPlayers} shows that prices converge to $0$, robustly across all Markovian equilibria, as soon as two parties possess the information, i.e., once a single buyer has purchased information. The intuition for this result is subtle because there are two opposing forces. On the one hand, no seller wishes to lose the opportunity to sell to a buyer, and so competition between sellers reduces the price of information. But on the other hand, our decentralized framework features trading frictions in which each buyer meets at most one seller at any instance, and bears some delay in waiting for the next trading opportunity. Analogous to the Diamond Paradox \citep{diamond1971model}, one may anticipate that these trading frictions would benefit sellers. In our setting, the competitive effect dwarfs the ``Diamond effect'' so that prices converge to $0$. 

Given this future competition, the monopolist captures rents in a Markovian equilibrium only from her first trade. We show that the seller-optimal among these equilibria uses delay as a strategy to capture some rents: the seller gets a high price for her first sale of information and there is a particular buyer designated to be her exclusive ``first buyer.'' After that first sale, prices converge to $0$. Because a designated buyer knows that he, in equilibrium, cannot obtain information from any other source, it is as if he and the seller are negotiating bilaterally for that information, and thus, the price is bounded away from $0$. Given this possibility, we show that it is self-enforcing for the seller and every other buyer to disagree until that first sale is made. While this strategy offers the seller some surplus, the seller can do this only once. Thus, across Markovian equilibria, resale severely limits the seller's ability to capture rents. 

We use this equilibrium to construct our prepayment scheme. 
When there are $n_B$ buyers, the seller trades the information good itself and also collects prepayments. She first collects prepayments from $n_B-1$ buyers and then trades the information good exclusively to the buyer who did not prepay before the others. We show that the seller can then appropriate approximately the same surplus from each buyer as if she were bargaining bilaterally with each and information could not be resold. 

Here is the strategic logic. Each buyer knows that he must either prepay or be the exclusive first information good buyer. In the latter case, he gives up substantial surplus to the seller, as in the seller-preferred Markovian equilibrium. Prepaying avoids this latter case, and lets him obtain information in the future at a price of approximately $0$. Thus, in equilibrium, prepayment becomes almost as valuable as information itself. Because prepayment opportunities are scarce, the seller extracts the entire seller's share of the value of prepayment from each buyer. The seller herself faces no incentive to deviate and sell information early because it impedes her ability to extract surplus from future buyers. 

This prepayment scheme requires no commitment: it is simply a non-Markovian equilibrium where keeping track of past prepayments adds just the right amount of history-dependence to solve the commitment problem. The resolution \emph{isn't} through a punishment scheme (in the spirit of repeated games) but instead exploits (a) bargaining dynamics when there are fewer prepayment opportunities than buyers, and (b) information prices converging to $0$ once a single buyer has bought information. It offers a non-contractual alternative to intellectual property protection. 

\paragraph{Related Literature:}

We contribute to the vast literature on information markets and intellectual property. One view, dating back to at least \cite{schumpeter} and also in \cite{grossman1994endogenous}, is that imperfect competition and bargaining are necessary ingredients for people to have incentives to acquire information. Our negative result shows that even if a seller is a monopolist and has substantial bargaining power, she may be unable to capture much of the social surplus that comes from acquiring information without restrictions to resale. But our positive result shows that an information-good monopolist can capture much of these rents through the use of both prepayments and delay. 

%Information markets have been of broad interest recently; see \cite{bergemann2019markets} for a survey. One vein of this literature on the problem of verifiability: \emph{how does a seller prove that she has valuable information without giving it away?} \cite{anton1994expropriation} study a contracting solution where the information monopolist commits to sell information to competing buyers if a buyer steals her idea without payment. \cite{horner2014selling} offer an elegant dynamic solution that involves gradually selling information and collecting payments. Our solutions also use delay but for the different reason that it helps with commitment problems in a setting with resale possibilities.

The resale-commitment problem is the focus of the innovative (and, in our view, underappreciated) study of \citet{polanski2007decentralized}. He studies information resale on arbitrary networks in an environment without discounting and restricts attention to an immediate agreement equilibrium. In this equilibrium, prices equal $0$ along any cycle in the graph. Also studying this setting, \cite{manea2021bottleneck} provides a complete payoff characterization (for the undiscounted limit) of the immediate agreement equilibrium in terms of the global network structure for general networks. In his setting, buyers may be heterogeneous in their intrinsic value for information, and some buyers may not value information intrinsically at all. His analysis uses ``bottlenecks'' and ``redundant links'' to offer an elegant graph-theoretic perspective of competition and intermediation in the networked market.

Our work complements these prior papers. Instead of restricting attention to the immediate agreement equilibrium in the undiscounted game, we derive bounds on prices across \emph{all} Markov equilibria in the frequent-offers limit of games with discounting.\footnote{One reason to study a frequent-offers limit of the game rather than the undiscounted game directly is that it is not obvious that the equilibrium correspondence is continuous at the undiscounted limit.} Looking at other equilibria is important because the seller-optimal equilibrium exploits delay. Our negative result shows that sellers' inability to capture significant rents emerges in all Markovian equilibria. Moreover, our solution to the resale problem, where we combine delays with prepayments in a non-Markovian equilibrium, has no counterpart in prior work.

Other settings offer related albeit distinct intuitions for how innovators may appropriate some value of their information. \cite{baccara2007bargaining} focus on the challenge of information leakage, where an innovator who hires agents to bring a product to market has to reveal the idea before the agent agrees to the offer. They show that an innovator can nevertheless appropriate a large share of the surplus with the threat of competition in the product market if the agent rejects the offer. Unlike our setting, future competition is an off-path threat that deters leakage; by contrast, in our analysis, it is the anticipation of (on-path) future competition and resale that incentivizes buyers to prepay for information. In \cite{henry2011waiting}, a monopolist induces potential imitators to delay entry through the credible promise that once information is obtained by one imitator, the resale market will ensure that information is cheap subsequently. Here, the equilibrium induces a war of attrition that benefits the initial monopolist. Our prepay equilibrium features different dynamics where each buyer attempts to preempt other buyers by prepaying for information.

% on how the seller can collect payment in advance by selling virtual ``tokens'' has no counterpart in their analysis; their setting focuses on the problem of a \emph{single} replicable information good. But there is an interesting parallel to be drawn. \cite{polanski2007decentralized} and \cite{manea2019bottleneck} highlight the role of bottlenecks in generating positive prices for some players. In our paper, by creating extra items to sell and selling them only to designated buyers, the seller is effectively creating bottlenecks for every buyer. The seller exploits these endogenous bottlenecks to appropriate a good fraction of the surplus. 

% Erik: Not sure if any of this should be preserved with the new solution.
%
%The notion that an information seller may garble information to each buyer so that each sale complements (rather than substitutes) other sales has been discussed elsewhere. \cite{sarvary1997marketing} study this issue in the context of competing information providers and \cite{christen2007competitive} show experimentally that the market price of information is indeed higher when all information is of lower quality. Our analysis takes this logic to its inexorable conclusion where information is packaged so that each bit is entirely useless without all of the other bits. An important (but different) reason to garble information, pursued in \cite{bergemann2015selling} and \cite*{bergemann2018design}, is to screen buyers based on their preferences and beliefs.

Our stylized analysis omits a number of important features. One of these is direct ``consumption-externalities'' for buyers, which has been studied extensively in the context of financial markets \citep{admati1986monopolistic}. \citet{muto1986information} studies how these negative externalities from others holding information can deter buyers from reselling information. \cite{polanski2019communication} extends results from his prior work to illustrate how they apply with consumption externalities. 

\section{Examples}\label{Section-Example}
We illustrate the main ideas of our paper using a simple example. Seller $S$ (``she'') has information that is valuable to two buyers, $B_1$ and $B_2$ (each of whom is a generic ``he''), and is worth $1$ util to each buyer. The agents share a discount rate of $r$, and each link meets with probability $\approx\lambda \,dt$ in a period of length $dt$. 
When a pair meets, transfers are determined through symmetric Nash Bargaining where agents' outside options are their continuation values without trade occurring, but following the same equilibrium. The ratio  $\lambda /r$ measures the frequency of trading opportunities per unit of effective time, and we take $\lambda/r\rightarrow\infty$ as the frequent-offers limit of the game.
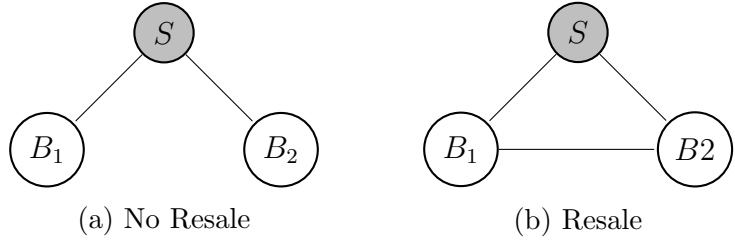
\begin{figure}[h!]
\centering
\begin{subfigure}{0.3\textwidth}
\centering
\begin{tikzpicture}[shorten >=1pt,node distance=2.2cm,auto]
\node[info] (1) {$S$};
\node[uninfo] (2) [below left of=1] {$B_1$};
\node[uninfo] (3) [below right of=1] {$B_2$};
\path[-] (1) edge [above left] node {} (2);
\path[-] (1) edge [above right] node {} (3);
\end{tikzpicture}
\caption{No Resale\label{Figure-Isolation}}
\end{subfigure}
\quad
%\begin{subfigure}{0.3\textwidth}
%\centering
%\begin{tikzpicture}[shorten >=1pt,node distance=2.2cm,auto]
%\node[info] (1) {$S$};
%\node[uninfo] (2) [below left of=1] {$B_1$};
%\node[uninfo] (3) [below right of=1] {$B2$};
%\path[-] (1) edge [above left] node {} (2);
%\path[-] (2) edge [below] node {} (3);
%\end{tikzpicture}
%\caption{Intermediation\label{Figure-Intermediation}}
%\end{subfigure}
\begin{subfigure}{0.3\textwidth}
\centering
\begin{tikzpicture}[shorten >=1pt,node distance=2.2cm,auto]
\node[info] (1) {$S$};
\node[uninfo] (2) [below left of=1] {$B_1$};
\node[uninfo] (3) [below right of=1] {$B_2$};
\path[-] (1) edge [above left] node {} (2);
\path[-] (1) edge [above right] node {} (3);
\path[-] (2) edge [below] node {} (3);
\end{tikzpicture}
\caption{Resale\label{Figure-Competition}}
\end{subfigure}
\caption{A Single Seller Trades Information with $2$ Buyers.}
\end{figure}\label{Figure-Example}

%\vspace{-0.15in}
\subsection{No Resale: A Benchmark}\label{Section-NoResale}
%\paragraph{Equilibrium without Resale:}
First, consider the setting depicted in \autoref{Figure-Isolation}: each buyer can buy information from the seller but cannot resell information to the other buyer. The benchmark therefore corresponds to two separate bilateral bargaining games. In each game, the symmetric Nash Bargaining price evenly splits the gains from trade between the buyer and seller. These gains are dynamic: the gain from trade is the gap between the joint surplus from trading today and from waiting for the next opportunity. Hence, the price solves 
\begin{align}\label{Equation-NoResale}
\underbrace{p - \frac{\lambda}{r + \lambda} p}_{\text{Seller's Gain from Trading Today}} =\underbrace{(1-p) - \frac{\lambda}{r + \lambda} (1 - p)}_{\text{Buyer's Gain from Trading Today}},
\end{align}
where $\frac{\lambda}{r + \lambda} = \int_0^\infty e^{-rt}e^{-\lambda t}\lambda\, \,dt$ is the factor representing the cost of delay (the pair's next trading time is distributed exponentially). 
%$Exp(\lambda)$).
%The LHS is the seller's gain from trading today versus waiting for the next time, and the RHS is that for the buyer. 
The solution, $p=\frac{1}{2}$, assures that the seller obtains half of the social surplus.  

%\paragraph{Intermediation:}Consider the setting shown in \autoref{Figure-Intermediation}: the seller has no direction connection to buyer $B_2$, and can reach that buyer only by using buyer $B_1$ as an intermediary. Once buyer $B_1$ obtains the information, she charges $\frac{1}{2}$ to buyer $B_2$ when reselling information, as found in the \textit{no-resale} case above. Therefore, the buyer's value from acquiring information today is 
%\begin{align*}
%V_1=1+\int_0^\infty e^{-rt}e^{-\lambda t}\lambda \left(\frac{1}{2}\right)\, dt,
%\end{align*}
%where the first term is the buyer's intrinsic value for information and the second term is her continuation value from reselling information. When the seller and buyer $B_1$ negotiate over prices, they each take the resale opportunity into account. Splitting the surplus from trading today, the price at which the seller sells information to buyer $B_1$ solves
%\begin{align*}
%p-\int_0^\infty e^{-rt}e^{-\lambda t}\lambda\, p\,dt =(V_1-p)-\int_0^\infty e^{-rt}e^{-\lambda t}\lambda\,(V_1-p)\,dt.
%\end{align*}
% We find that in the intermediated market, $p=\frac{1}{2}+\frac{\lambda}{4(\lambda+r)}$ converging to $\frac{3}{4}$ as $\frac{\lambda}{r}\rightarrow \infty$. In the limit, the seller and buyer $B_1$ split half the gains from resale. 

%\vspace{-0.15in}
\subsection{An Immediate Agreement Equilibrium with Resale}
\label{Section-AgreementEqmExample}
%\paragraph{An Immediate Agreement Equilibrium with Resale:}
We now consider the setting shown in \autoref{Figure-Competition} where all three agents are connected and can trade information. We first consider an immediate agreement equilibrium where parties expect every trading opportunity to result in an information trade. For this discussion, it is useful to define $\gamma\equiv \int_0^\infty e^{-rt}e^{-2\lambda t}\lambda\, dt = \frac{\lambda}{r + 2 \lambda}$.
%\begin{align}\label{Equation-Gamma}
%\gamma\equiv \int_0^\infty e^{-rt}e^{-2\lambda t}\lambda\, dt
%\end{align}
For an agent, fixing one of his links $\ell$, the term $\gamma$ is the discounted weight that of his two links, $\ell$ is the next link that is selected. In the frictionless limit, as $\frac{\lambda}{r}\rightarrow\infty$, $\gamma$ approaches $\frac{1}{2}$. 
We first consider the history after only one buyer has bought information, and then consider the original game in which no buyer has bought information. \smallskip

\noindent\emph{Only one buyer is informed}: Suppose that buyer $B_1$ is informed but buyer $B_2$ is not; the latter can then purchase information from one of two parties. Denote the equilibrium price in this history by $p(2)$ (because there are two agents who can sell information). Symmetric Nash Bargaining implies that $p(2)$ solves
\begin{align}\label{Equation-OneBuyerInformed}
\underbrace{p(1-\gamma)}_{\text{Seller's Gain from Trading Today}}=\underbrace{(1-p)(1-2\gamma).}_{\text{Buyer's Gain from Trading Today}}
\end{align}
The current seller's gain from trading today is that she secures the sale; by contrast, if she waits, she can only sell to the uninformed buyer if that buyer does not meet the other seller. Therefore, for any strictly positive price $p$, the LHS is at least $p/2$. By contrast, the buyer's only gain from waiting is avoiding delay, which vanishes at the frictionless limit. For \cref{Equation-OneBuyerInformed}  to hold as $\frac{\lambda}{r}\rightarrow\infty$, the price $p(2)$ converges to $0$.\footnote{The equilibrium price, $p(2)$, equals $p(2)=\frac{1-2\gamma}{2-3\gamma}$, which converges to $0$ as $\frac{\lambda}{r}\rightarrow\infty$. This outcome replicates Bertrand competition even though the buyer never meets both sellers simultaneously.}  
\smallskip

\noindent\emph{No buyer is informed}: Low second stage prices influence negotiations at the earlier stage before any buyer has purchased information. The payoff of buying information at price $p$ for the first buyer then is an immediate benefit of $(1-p)$, and the potential for reselling that information if he meets the other buyer first, which has a discounted value of $\gamma p(2)$. His ``outside option'' is his payoff from waiting and not trading today: in the future, either the seller meets him first in which case he is in the same position as today, or the seller meets the other buyer first, in which case he can buy information at a price of $p(2)$ from either of the other two agents. The first contingency obtains a stochastic discount of $\gamma$ (reflecting that the seller does not meet the other buyer first), and the second contingency obtains a stochastic discount of $2\gamma^2$ (reflecting that the other buyer buys information first, after which there are twice as many opportunities to buy information). Thus, his gain from buying information at any fixed price $p$ is 
\begin{align*}
	1-p+\gamma p(2)-\gamma \left(1-p+\gamma p(2)\right)-2\gamma^2 (1-p(2))\rightarrow -\frac{p}{2},
\end{align*}
which implies that for trade to occur, the price that the first buyer pays for information must converge to $0$. The intuition here is that each buyer recognizes that he can wait to be the second buyer and obtain the information for virtually free; hence, he has no gain from securing the information now at a strictly positive price.

\subsection{The Seller's Optimal Equilibrium (Markovian)}\label{Section-OptimalEqmExample}
%\paragraph{The Seller's Optimal Equilibrium:} 
The issue above is that all parties expect every trading opportunity to end with information being traded. We now show how the seller can credibly use delay and disagreement to obtain $1/2$ from the first buyer. In this equilibrium, only one specific buyer anticipates having the opportunity to be the second purchaser of information. Suppose that the seller never trades information with buyer $B_2$ \emph{until} she has sold information to buyer $B_1$; after she sells information to $B_1$, then she and $B_1$ shall compete to sell information to $B_2$ at the price $p(2)$ characterized above. Thus, buyer $B_1$ never anticipates being able to be the second purchaser of information. The price that the seller and $B_1$ agree to solves
\begin{align}\label{Equation-EndogenousBottleNeckExample}
\left(p^*+\gamma p(2)\right)\left(1- \frac{\lambda}{r + \lambda} \right)=\left(1-p^*+\gamma p(2)\right )\left(1- \frac{\lambda}{r + \lambda} \right).
\end{align}
The LHS reflects the seller's gain from trading today versus waiting for tomorrow: she obtains price $p^*$ today and after selling information, she can potentially be the first to sell information to the other buyer (which is thus discounted by $\gamma$). The RHS is buyer $B_1$'s gain from trading today versus waiting. If he agrees, then he obtains a payoff of $1-p^*$ today and potentially resells information successfully at $p(2)$ (discounted by $\gamma$). If he rejects, he anticipates that the next trading opportunity where there may be trade is where once again, he faces the prospect of being the first to buy information from the seller. Solving \cref{Equation-EndogenousBottleNeckExample} yields that $p^*= \frac{1}{2}$.

For this to be an equilibrium, the seller and buyer $B_2$ must not trade until information has been sold to $B_1$. This is compatible with Nash Bargaining only if 
\begin{align*}
\underbrace{\frac{\lambda}{r+\lambda}\left(p^*+\gamma p(2)\right)}_{\text{Seller's Surplus with No Trade}}+\underbrace{\frac{\lambda}{r+\lambda}\left(2\gamma (1-p(2))\right)}_{\text{Buyer's Surplus with No Trade}}\geq \underbrace{1+2\gamma p(2).}_{\text{Joint Surplus with Trade}}
\end{align*}
As $\frac{\lambda}{r}\rightarrow \infty$, the LHS converges to $\frac{3}{2}$ whereas the RHS converges to $1$. Therefore, the seller and buyer $B_2$ would not trade before the seller sells information to $B_1$.

%\vspace{-0.15in}
\subsection{A Prepay Equilibrium (non-Markovian)}
%\paragraph{The Benefits of Splitting Information:}
We now study how the seller can do better by collecting unconditional one-sided prepayments. The seller allows only one buyer to prepay.  Once a buyer has prepaid, then we exploit the equilibrium constructed in \Cref{Section-OptimalEqmExample} where the other buyer has to be the first to buy information. Thus, if buyer $B_i$ prepays, the seller sells information first only to the other buyer $B_j$. Once $B_j$ purchases information, then the seller and $B_j$ compete to sell information to $B_i$, the buyer who had prepaid.\footnote{In an off-equilibrium path history where the seller sells information before any prepayment, the seller and buyer who has bought information compete in selling it to the other buyer.} 
Prepayment effectively gives a buyer the ability to be the second purchaser of information.

In this equilibrium, once one buyer has prepaid, the other buyer pays a price of $p^*$ for information (as in \Cref{Section-OptimalEqmExample}). Now, consider the initial prepayment stage. Disagreement means that the seller may next meet with either buyer to try to collect prepayment. The price $p^{**}$ for prepayment equalizes gains from trade to the seller and buyer, and thus solves
\begin{multline}\label{Equation-PrepayExample}
\left(1- \frac{\lambda}{r + \lambda} \right) \left( p^{**}+ \frac{\lambda}{r + \lambda} p^{*} + \gamma p(2) \right) 
= \begin{aligned}\left(1- \gamma \right)& \left(-p^{**} + \frac{\lambda}{r + \lambda} 2 \gamma [1 - p(2)] \right ) \\- \gamma & \left( \frac{\lambda}{r + \lambda} [1 - p^{*} + \gamma p(2)] \right)\end{aligned}.
\end{multline}
%\begin{multline}\label{Equation-PrepayExample}
%\left(1-\int_0^\infty e^{-rt}e^{-\lambda t}\lambda\,dt\right) \left( p^{**}+ \frac{\lambda}{r + \lambda} p^{*} + \gamma p(2) \right) \\
%= \left(1- \gamma \right) \left(-p^{**} + \frac{\lambda}{r + \lambda} 2 \gamma [1 - p(2)] \right ) - \gamma \left( \frac{\lambda}{r + \lambda} [1 - p^{*} + \gamma p(2)] \right)
%\end{multline}
The LHS is the seller's gain from collecting prepayment now versus waiting for the next trading opportunity, and the RHS is the buyer's gain.\footnote{Let us explain how each side of this equation is derived. On the LHS,  the seller first collects prepayment at price $p^{**}$, then after some delay sells information to the other buyer at price $p^{*}$, and finally may get the opportunity to sell information to the remaining uninformed buyer at price $p(2)$; her disagreement payoff is the same as her agreement payoff, subject to some delay. On the RHS, if the buyer agrees, he pays $p^{**}$, then waits until information is sold to the other buyer, and finally pays $p(2)$ for information from either of the other agents. If he disagrees, there are two possibilities. The first possibility is that the next trading event is the same and leads to the same payoff as the agreement payoff; this explains the first term, in which the $(1-\gamma)$ discount reflects the gain from agreement today versus waiting for this event in the future. The other possibility is that the seller collects prepayment from the other buyer; then, this buyer anticipates paying $p^*$ for information in the future and possibly selling it to the other buyer (the one who prepaid) at price $p(2)$. } Because $p^*=1/2$ and $p(2)\rightarrow 0$, we see that the price of the prepayment, $p^{**}$ converges to $\frac{1}{2}$ in the frequent-offer limit. Thus, the seller receives approximately $1/2$ for one buyer's prepayment and $1/2$ for selling information to the other buyer. 

The logic here is that prepaying gives a buyer the opportunity to buy information as the second purchaser, at which point, the price for information is $\approx 0$. By contrast, if the other buyer prepays, one has to purchase information at a price of $\approx 1/2$. Thus, the equilibrium value of prepayment is $\approx 1/2$. As prepayment is available to only one buyer, competition between the two buyers implies that the seller can extract close to $1/2$ for prepayment in the frequent-offers limit. Rather than being stymied by future resale possibilities, this prepay equilibrium exploits them to secure rents for the seller. Observe that, in the frequent-offer limit, the seller's payoff here coincides with that of \Cref{Section-NoResale} (where resale was prohibited).
\section{Model}\label{Section-Model}

%\vspace{-0.15in}
%\subsection{Environment}\label{Section-Environment}
\paragraph{Environment:}
A set of \emph{initial} buyers $ B\equiv \{b_1,\ldots,b_{n_B}\}$ interacts with a set of \emph{initial} sellers $ S\equiv\{s_1,\ldots,s_{n_S}\}$. The set of all agents is $A\equiv  B\cup S$ and $n\equiv n_B+n_S\geq 3$ is its cardinality. Each seller has identical information of value $v>0$ to each buyer. Information is replicable so it can be sold separately to each buyer, and any buyer who gains it can resell it to others. Hence, the set of buyers and sellers changes over time, and so we distinguish between ``informed'' and ``uninformed'' agents. 

Each pair of agents $i$ and $j$ has a link---a set $\{ i, j \}$, usually denoted $ij$ or $ji$---which reflects the potential to transfer money and/or information. Information may be transferred only from an informed agent to an uninformed one, and money may be transferred between an uninformed agent and an informed one.\footnote{In principle, one could allow money to change hands between pairs of informed agents or pairs of uninformed agents, but this adds notation without changing our results.} We denote the complete network by $G \equiv \{ \{ i, j \} \in 2^{A} | i \neq j \}$.

Let $T\equiv[0,\infty)$ denote the interval of \emph{real-time}. Trading opportunities occur at discrete points of time in  $T_\Delta\equiv\{0,\Delta,2\Delta,\ldots\}$, where $\Delta>0$ is the period length between trading opportunities. Time periods are indices $t \in \{ 1, 2, \ldots \}$, each beginning at real-time $(t - 1) \Delta$. At each time period, a pair of agents $i$ and $j$ is selected uniformly at random to potentially trade money and/or information, independently of the past; the probability that a particular pair is selected is therefore $\rho \equiv \frac{2}{n (n - 1)}$. Each time that a link is selected, it can result in one of three outcomes: $D$ denoting disagreement, $A_M$ denoting an agreement to transfer money but not information, and $A_I$ denote both information and potentially money. Therefore, the set of all histories is $H \equiv \{ \{ ( i_{\tau} j_{\tau}, \alpha_{\tau}, p_{\tau} ) \}_{\tau = 1}^{t} : t \in \mathbb{N}^{+}, i_{\tau} j_{\tau} \in G, \alpha_{\tau} \in \{ D, A_{M}, A_{I} \}, p_{\tau} \in \Re, \forall \tau \}$, where $ij$ is the link selected in that period, $\alpha \in \{ D, A_{M}, A_{I} \}$ denotes the outcome, and $p$ is the money transferred from the uninformed agent to the informed agent, or $0$ if both agents are the same type. For any history $h$ and any single period history entry $(ij, \alpha, p)$, $\{ h, (ij, \alpha, p) \}$ is the history formed by appending $(ij, \alpha, p)$ to the end of sequence $h$. Associated to each history is the set of agents who have (and can therefore sell) information at that history; for history $h$, we denote this set by $M(h)$. We define $\mathcal \Infmd\equiv\left\{\Infmd\subseteq A: S\subseteq M \right\}$ as the set of feasible sets of informed agents. Let $m = | M |$ be the number of informed agents when the set $M$ is clear from the context.

Agents have discount rate $r\geq 0$ with a per-period discount factor of $\delta \equiv e^{-r\Delta}$. Our results concern the \emph{frictionless limit} of the market, namely $\Delta\rightarrow 0$. 

\paragraph{Solution Concept:}
Our solution concept combines \emph{rational expectations} in how agents derive continuation values from future actions and \emph{Nash Bargaining}, which specifies their actions given those continuation values. We define this solution concept broadly to encompass both ``Markovian equilibria,'' where all decisions condition on only the set of informed agents, as well as the prepayment equilibria.%

% %This section formulates the solution concept for an $n$-clique with identical value of information to all agents, but the general formulation can be found in Appendix \ref{app:General-Soln}.
% Continuation values, trading prices, and trading decisions condition on the history. 
%
We define a value function $V:A\times H\rightarrow \Re$ such that $V(i,h)$ is agent $i$'s expected payoff after the history $h$. It is measured from the start of a period, before a pair is selected for a trading opportunity. \emph{Trading functions} comprise prices $p: G \times H \rightarrow \Re$ and decisions $\alpha: G \times H \rightarrow \{D,A_{M},A_{I}\}$ such that $\alpha(i,j,h)=D$ if there is disagreement, $\alpha(i,j,h)=A_{M}$ if there is agreement but no information is transferred, and $\alpha(i,j,h)=A_{I}$ if information is transferred, when the history is $h$. If no information is exchanged, money exchanges are still captured by $p(i, j, h)$, the money transfer from the uninformed agent to the informed agent ($0$ if agents are the same type).
\cref{Definition-RationalExpectations} uses trading functions to derive equilibrium continuation values while \cref{Definition-NashBargaining} uses these continuation values to derive trading functions consistent with Nash Bargaining. \Cref{Definition-Equilibrium} brings these two together to define an equilibrium. 

\begin{definition}\label{Definition-RationalExpectations}
Given trading functions $(\alpha,p)$, $V$ satisfies \textbf{Rational Expectations} if for every history $h$, for every informed agent $s$ and uninformed agent $b$, 
\footnotesize
\begin{align*}
\begin{split}
V(s, h)\equiv & \rho \Bigl[ \underbrace{\sum_{b' \in A \setminus M(h)} p(s,b',h)}_{\text{Instantaneous Gains from Trade}}+\underbrace{\sum_{i j \in G} \delta V(s, \{ h, (ij, \alpha(i, j, h), p(i,j,h)) \})}_{\text{New Continuation Value}} \Bigr],
\end{split} \\
\begin{split}
V(b,h)\equiv & \rho \Bigl[ \underbrace{\sum_{s' \in M(h)} [ \mathbbm{1}\{ \alpha(s', b, h) = A_{I} \} v - p(s',b,h)]}_{\text{Instantaneous Gains from Trade}}+\underbrace{\sum_{i j \in G} \delta V(b, \{ h, (ij, \alpha(i, j, h), p(i,j,h)) \})}_{\text{New Continuation Value}} \Bigr],
\end{split}
\end{align*}
\normalsize
\end{definition}
The above expressions, taking behavior as given, describe an agent's expected payoff from potential transfers and associated changes in continuation values, as well as changes in continuation value based on other interactions that occur in that time period. Our next definition uses these continuation values to describes behavior consistent with Nash Bargaining. 

There are two key features of Nash Bargaining: first, trading information or money happens within a pair only if it increases their joint (bilateral) surplus, and second, the transfers split those gains from trade in a particular ratio. Symmetric Nash bargaining splits those gains evenly. We allow for asymmetries in bargaining power that depend on whether a party is buying or selling information. To that end, we denote by $w\in (0,1)$ an informed agent's---i.e., seller's---bargaining weight.

We first derive Nash Bargaining prices given trading decisions. Suppose that at history $h$, an informed agent $s$ and uninformed agent $b$ meet. The trading decisions at this stage could involve disagreement ($\tilde\alpha=D$), an agreement to transfer money but not information ($\tilde\alpha=A_M$), or an agreement to transfer both information and potentially money ($\tilde\alpha=A_I$). For every agreement decision, $\tilde{\alpha}\in \{A_M,A_I\}$, the price function $P(\tilde\alpha)$ is equal to the price $\tilde p$ that solves
% \begin{align}\label{Equation-pricefunction}
% (1 - w) &\left[ \underbrace{ \tilde{p} + \delta  V(s, \{ h, (bs, \tilde{\alpha}, \tilde{p}) \})- \delta  V(s, \{ h, (bs, D, 0) \}) }_{\text{Change in Informed Agent's Surplus}} \right]\\=w& \left[ \underbrace{ \mathbbm{1}\{ \tilde{\alpha} = A_{I} \}v - \tilde{p}+ \delta  V(b, \{ h, (bs, \tilde{\alpha}, \tilde{p}) \}) - \delta  V(b, \{ h, (bs, D, 0) \})}_{\text{Change in Uninformed Agent's Surplus}} \right].
% \end{align}
\begin{align}\label{Equation-pricefunction}
\frac{w}{1-w}=\frac{ \overbracket{ \tilde{p} + \delta  V(s, \{ h, (bs, \tilde{\alpha}, \tilde{p}) \})- \delta  V(s, \{ h, (bs, D, 0) \}) }^{\text{\emph{Change in Seller's Surplus}}}}{ \underbracket{ \mathbbm{1}\{ \tilde{\alpha} = A_{I} \}v - \tilde{p}+ \delta  V(b, \{ h, (bs, \tilde{\alpha}, \tilde{p}) \}) - \delta  V(b, \{ h, (bs, D, 0) \})}_{\text{\emph{Change in Buyer's Surplus}}}}.
\end{align}
This equation states that the gains from trade are split according to the ratio $w:(1-w)$. There is an immediate gain of $v$ to the buyer if the parties agree to trade information ($\tilde\alpha=A_I$). Moreover, the seller's and buyer's continuation values may change relative to those from disagreement if either money or information is transferred. Allowing for such changes in continuation value allows us to model the history-dependence exploited by prepayments. 

\Cref{Equation-pricefunction} describes prices for agreement decisions; although prices will not matter for disagreement decisions ($\tilde\alpha=D$), for concreteness, we set $P(D)=0$.

The above describes prices for arbitrary trading decisions; Nash Bargaining also stipulates that trading decisions should maximize (bilateral) joint surplus at an interaction. This implies that at a history $h$, the trading decision $\alpha(s,b,h)$ should select the value of $\tilde\alpha\in \{D,A_M,A_I\}$ that maximizes
\begin{align}\label{Equation-AgreementDecisions}
\mathbbm{1}\{ \tilde{\alpha} = A_{I} \} v + \delta V(s, \{ h, (bs, \tilde{\alpha}, P(\tilde{\alpha})) \}) + \delta V(b, \{ h, (bs, \tilde{\alpha}, P(\tilde{\alpha})) \} ).
\end{align}
The above denotes the bilateral joint surplus following any of the trading decisions, given that the prices that are set correspond to Nash Bargaining prices. 

Our definition below imposes the above as Nash Bargaining trading functions.

\begin{definition}\label{Definition-NashBargaining}
Given a value function $V$, trading functions $(\alpha,p)$ satisfy \textbf{Nash Bargaining} if for all histories $h$, for all informed agents $s$ and uninformed agents $b$,  the trading decisions $\alpha(s,b,h)$ maximize \eqref{Equation-AgreementDecisions} and the prices correspond to \Cref{Equation-pricefunction}, viz., $p(s,b,h)=P(\alpha(s,b,h))$, and for all $i, j$ who are both uninformed or both informed, $\alpha(i, j, h) = D$ and $p(i, j, h) = 0$.
\end{definition}

Connecting continuation values and trading functions defines our solution concept. 

\begin{definition}\label{Definition-Equilibrium}
An \textbf{equilibrium} is a triple $(\alpha,p,V)$ such that given $(\alpha,p)$, the continuation value function $V$ satisfies Rational Expectations and given $V$, the trading functions $(\alpha,p)$ satisfy Nash Bargaining.
\end{definition}
This solution concept combines Nash Bargaining with dynamic considerations.  \Cref{Section-Trading} characterizes equilibria where all decisions condition only on the set of informed agents $M$, which is the payoff-relevant state variable; thus, the solution concept has the flavor of a Markov-Perfect Equilibrium \citep{maskin2001markov}. This solution concept is standard, widely used in the prior literature and that on decentralized search and bargaining.\footnote{While we do note pursue a non-cooperative microfoundation here, one may view each bilateral bargaining game as the outcome of matched agents playing ``quick'' random proposer bargaining games that break down with some probability, as in \cite*{binmore1986nash}.} The prepay equilibrium studied in \Cref{Section-Prepay} is non-Markovian as behavior varies with past payments, even as the set of informed players remains fixed. 
\section{The Challenge of Information Resales}\label{Section-Trading}
%
%As outlined in the example in \autoref{Section-Example} in equilibrium the value to selling the information comes from a combination of buyer consumption value and resale value. In this section we establish the existence of two types of equilibria in an $n$-clique: (1) an immediate agreement equilibrium and (2) a set of endogenous bottleneck equilibria.
%\vspace{-0.10in}

Here, we focus on the set of Markovian equilibria, where we specify that trading decisions and prices condition on only the payoff-relevant state variable, $M$. For notational convenience, we redefine $\alpha: G \times \mathcal{M} \rightarrow \{ D, A_M, A_I \}$, $p: G \times \mathcal{M} \rightarrow \Re$, and $V: A \times \mathcal{M} \rightarrow \Re$, as these do not vary across histories $h, h'$ that have the same set of currently informed agents $M \equiv M(h) = M(h')$. 
%If \cref{Definition-RationalExpectations,Definition-NashBargaining} hold for history $h$, they also hold for history $h'$ and vice versa. 
Observe that in this Markovian setting, there would be no payments without the transfer of information; because such payments would not influence future behavior, \Cref{Equation-pricefunction} implies that such prices equal $0$. Therefore, the only kind of agreements that form here are those that involve the transfer of information. %This is a consequence of \Cref{Equation-pricefunction}, which shows that the prices from agreement without information transfer equals $0$.

Without the prospect for resale, a single seller would obtain $wv$ from \emph{each} buyer, accumulating a total of $n_B wv$. We prove below that $wv$ is a tight upper bound to the seller's \emph{total} payoff across all Markovian equilibria. \Cref{Section-PreliminaryResults} asserts preliminary results on prices and existence of equilibria, and \Cref{Section-BoundingPricesAcrossEquilibria} proves our main results. 

\subsection{Preliminary Results}\label{Section-PreliminaryResults}

We first fix a trading function $\alpha$, and given that, we derive continuation values and prices that are conditionally consistent with Rational Expectations and Nash Bargaining. This trading function may be inconsistent with Nash Bargaining in that we do not impose \Cref{Equation-AgreementDecisions}: the trading function could be compelling pairs to trade (resp. not trade) even if their joint surplus from trading is below (resp. above) that from not trading. This intermediate step of deriving prices and continuation values for arbitrary trading decision rules is useful for our subsequent results. 

Given a trading function $\alpha$, we say that link $ij$ is \emph{active} when the set of informed agents is $M$ if $\alpha(i,j,M)=A_I$. In other words, if $i\in M$ and $j\notin M$, then when agents $i$ and $j$ have a trading opportunity, they trade information so that their behavior induces a transition of the set of informed agents from $M$ to $M\cup \{j\}$. By convention, for any active link $ij$, $i$ will be the informed agent.

\begin{proposition}
\label{Proposition-UniquePrices}
For any trading function $\alpha$, the following hold:
\begin{enumerate}[nolistsep]
\item There exist unique prices consistent with Rational Expectations and Nash Bargaining.
\item If for every proper subset $M' \subset A \setminus M$, there is more than one active link in state $M \cup M'$, then for all links $ij$ that are active in state $M$, $p(i, j, M) \rightarrow 0$ as $\Delta \rightarrow 0$. 
\item If for some state $M$, there is only one active link $ij$, then
\begin{align*}
\lim_{\Delta\rightarrow 0}	p(i, j, M) = w v + w \lim_{\Delta \rightarrow 0} V(j, M \cup \{ j \}) - (1 - w) \lim_{\Delta \rightarrow 0} V(i, M \cup \{ j \}) .
\end{align*}
% In particular, there must exist some state $M'$ with lone active link $i'j'$, where $p(i', j', M') \rightarrow \frac{v}{2}$. \nageeb{Think about omitting this.}
\end{enumerate}
\end{proposition}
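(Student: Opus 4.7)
The plan is to proceed by backward induction on the state lattice $\mathcal{M}$ ordered by set inclusion, with base case $V(i, A) = 0$ at the top. At any state $M$, I treat $V(\cdot, M')$ for $M' \supsetneq M$ as already pinned down, and solve simultaneously for $V(\cdot, M)$ and $p(\cdot, \cdot, M)$.

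For Part 1, I rearrange the Nash bargaining formula as
\[
p(i,j,M) = wv + w\delta[V(j, M\cup\{j\}) - V(j, M)] - (1-w)\delta[V(i, M\cup\{j\}) - V(i, M)]
\]
and substitute into the Rational Expectations equations to obtain a linear system in the $n$ unknowns $\{V(k, M) : k \in A\}$, with right-hand side determined by downstream values. The coefficient on $V(i, M)$ in its own equation picks up a $1 - (1 - k_M \rho)\delta - \rho w s_i^{\text{out}} \delta$ term, where $s_i^{\text{out}}$ counts the active links outgoing from $i$, while off-diagonal entries are of size $\rho w \delta$. I plan to establish invertibility via a diagonal-dominance argument leveraging $\delta < 1$, yielding unique $V(\cdot, M)$ and hence unique $p$ via NB.

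For Part 2, I induct on $n - m$. A preliminary observation: having $\geq 2$ active links in every state $M \cup M'$ with $M' \subsetneq A \setminus M$ forces every uninformed buyer to eventually be served, since any transition chain must reach the state with a unique uninformed buyer, at which the $\geq 2$ active links all point to him. So in every state $M' \supsetneq M$, prices vanish in the limit (by induction), effective waits sum to $O(n \Delta) \to 0$, and therefore $V(i, M') \to 0$ for informed $i$ while $V(j, M') \to v$ for uninformed $j$. Taking $\Delta \to 0$ in the RE expressions at state $M$ gives $\lim V(i, M) = k_M^{-1}\sum_{j': (i, j') \in A_M} P_{i, j'}$ and $\lim V(j, M) = v - k_M^{-1} \sum_{i': (i', j) \in A_M} P_{i', j}$, where $P_{i, j} \equiv \lim p(i, j, M)$. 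Substituting into the limit of NB produces the fixed-point system
\[
P_{i, j} = (1-w)\bar{P}_{i, \cdot} + w\bar{P}_{\cdot, j}, \qquad (i, j) \in A_M,
\]
with $\bar{P}_{i, \cdot} = k_M^{-1}\sum_{j'} P_{i, j'}$ and $\bar{P}_{\cdot, j} = k_M^{-1}\sum_{i'} P_{i', j}$ summing over active links. A maximum principle finishes: if $P^{\max} \equiv \max_{(i, j) \in A_M} P_{i, j} > 0$, then at a maximizer $(i^*, j^*)$, the relation gives $P^{\max} \leq P^{\max}[(1-w) s_{i^*}^{\text{out}} + w s_{j^*}^{\text{in}}]/k_M$, forcing $s_{i^*}^{\text{out}} = s_{j^*}^{\text{in}} = k_M$, which is impossible when $k_M \geq 2$. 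Applying the same argument to $-P_{i, j}$ (which satisfies the same relation) rules out $\min P_{i, j} < 0$, so all $P_{i, j} = 0$.

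For Part 3, with only link $(i, j)$ active in state $M$, RE reduces to $V(i, M)[1 - (1-\rho)\delta] = \rho p + \rho \delta V(i, M\cup\{j\})$ and analogously for $V(j, M)$. Direct computation shows the seller's change in surplus factors as $(1-\delta)[p + \delta V(i, M\cup\{j\})]/[1 - (1-\rho)\delta]$, and the buyer's change carries the same prefactor. The prefactor cancels in NB, leaving
\[
(1-w)[p + \delta V(i, M\cup\{j\})] = w[v - p + \delta V(j, M\cup\{j\})];
\]
solving for $p$ and sending $\Delta \to 0$ delivers the formula. The main obstacle is the maximum-principle step in Part 2 and recognizing that both positive and negative extremes of $P_{i, j}$ must be ruled out separately; once the inductive hypothesis on downstream values is properly set up, the algebra resolves cleanly using only $w \in (0, 1)$.
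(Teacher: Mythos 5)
Your proposal is correct and reaches the same conclusions, but it takes a genuinely different technical route in Parts 1 and 2. For Part 1, the paper eliminates the value functions and solves a linear system in the \emph{prices}, proving invertibility by showing the coefficient matrix $\Psi(M)=I-\frac{\delta\hat{\rho}}{1-\delta\hat{\rho}}\Phi(M)$ is a Z-matrix exhibiting semipositivity, hence a non-singular M-matrix; you instead eliminate the prices and solve for the continuation values, with invertibility via strict diagonal dominance. That works, with one bookkeeping slip: in a seller's row the self-coefficient picked up from substituting the price is $\rho(1-w)s_i^{\text{out}}\delta$, not $\rho w s_i^{\text{out}}\delta$ (the $w$-weighted term belongs to buyer rows); either way the dominance margin is $1-(1-k_M\rho)\delta-\rho\delta s_i^{\text{out}}\geq 1-\delta>0$, so the conclusion stands. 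For Part 2, the paper computes that the right-hand side $\vec{\kappa}(M)/(1-\delta\hat{\rho})$ of its price system vanishes and concludes the price vector does too; you instead pass to the limit in both Rational Expectations and Nash Bargaining to obtain the averaging relation $P_{i,j}=(1-w)\bar{P}_{i,\cdot}+w\bar{P}_{\cdot,j}$ and dispose of it with a maximum principle (and the matching argument for the minimum), using $k_M\geq 2$ and $w<1$. Your route is arguably more self-contained at this step, since the paper's inference that a vanishing right-hand side forces a vanishing solution implicitly needs $\Psi(M)^{-1}$ to stay bounded as $\Delta\to 0$, while your argument needs only the (shared, and equally implicit in the paper) fact that prices are bounded so that limit points exist. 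Your preliminary observation---that the hypothesis of Part 2 forces every buyer to be served with vanishing expected delay, so downstream informed values tend to $0$ and downstream uninformed values to $v$---is precisely the input the paper also uses. Part 3 is essentially the paper's own computation: the common factor $(1-\delta)/[1-(1-\rho)\delta]$ cancels from both sides of the bargaining equation, leaving the stated limit formula.
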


\Cref{Proposition-UniquePrices} establishes that the trading function pins down, for every period length, the prices and continuation values consistent with Rational Expectations and Nash Bargaining. Furthermore, it specifies an important consequence of competition: if there are multiple active links now and in all future states, then prices converge to $0$ as the period length vanishes. Finally, if there is a single active link, then the agents divide their joint surplus from trading according to the Nash bargaining weights.  

We use \Cref{Proposition-UniquePrices} to constructively prove that an equilibrium exists for low frictions. We consider an \emph{immediate agreement} trading function, which specifies that links are active whenever possible. We prove that this trading function generates prices and continuation values such that this trading function is also consistent with our equilibrium conditions, resulting in an immediate agreement equilibrium.
%\footnote{Our argument for existence differs from those in \cite{polanski2007decentralized} and \cite{manea2019bottleneck} because their results leverage indifferences in a completely frictionless model. By contrast, we use the simplicity of a complete graph to construct an equilibrium in the game with frictions.} 
\begin{proposition}\label{prop:immediate_agreement}
There exists $\overline{\Delta}>0$ such that for all $\Delta<\overline\Delta$, there exists an immediate agreement equilibrium. 
%Moreover, for every feasible trading opportunity, $(i,j,M)\in~\mathcal T$, the price for information vanishes in the frictionless limit of the market:  $\lim_{\Delta\rightarrow 0} p(i,j,M)= 0$.
\end{proposition}

%The immediate agreement equilibrium that we construct is focal both because it features no delays or disagreements and because pricing here is anonymous. In fact, this equilibrium would be unique if we restricted attention to anonymous pricing schemes. \cref{prop:immediate_agreement} highlights that a competitive outcome emerges even though all trading opportunities are purely bilateral. When there are two or more sellers of information, the prospect of buying information from another seller with minimal delay leads to prices converging to $0$. When there is a monopolistic seller $(n_S=1)$, that monopolistic seller is forced to give away information for virtually free because of the prospects for future competition once that information is sold to a third-party.

As every link is active in this equilibrium, \Cref{Proposition-UniquePrices} implies that prices converge to $0$ as $\Delta\rightarrow 0$. Immediate agreement equilibria are the focus of prior work \citep{polanski2007decentralized,manea2021bottleneck}, which have shown that these equilibria exist generally if all agents are \emph{perfectly patient}. Because our model involves frictions, our method of proof is different. We use the simplicity of a complete graph and identical values of information to construct an equilibrium with small positive frictions. 

Our focus is not on the immediate agreement equilibrium per se, other than as a way to assure equilibrium existence; instead, we investigate whether a seller's inability to capture a large fraction of the social surplus is a conclusion that holds across all Markovian equilibria. We turn to this question next. 

\subsection{Main Results: Bounding Prices Across All Equilibria}\label{Section-BoundingPricesAcrossEquilibria}
In \Cref{Section-OptimalEqmExample}, we see that a seller may use delay credibly to obtain a non-trivial share of the surplus (in contrast to the immediate agreement equilibrium) from one of the buyers. We prove that this upper-bound is general: a monopolist can capture a non-trivial fraction of the surplus from \emph{at most} one buyer, because once there are two or more informed agents, prices converge to $0$ across all Markovian equilibria. 

\begin{proposition}\label{thm:twoinfprice}\label{Proposition-TwoInformedPlayers}
If there are at least two informed agents, then equilibrium prices converge to $0$ in the frictionless limit. Formally, for any equilibrium $(\alpha,p,V)$ and $\Infmd \in \mathcal{M}$, if $|\Infmd|\geq 2$, then for all $s\in \Infmd, \ b\in \Uninfmd$, $\lim_{\Delta \rightarrow 0} p(s, b, M) = 0 $. 
\end{proposition}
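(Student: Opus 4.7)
The plan is to prove the proposition by strong induction on the number of uninformed players $k = n - |M|$, from $k = 1$ (all but one agent informed) up to $k = n - 2$ (the smallest state with at least two informed agents). I will prove two companion claims at each level of the induction: (A) every state $M$ with $|M| = n - k$ admits at least two active links of the form $(i,j,M)$, and (B) every such active link satisfies $\lim_{\Delta \to 0} p(i,j,M) = 0$. Once (A) holds at $M$ and at every intermediate state between $M$ and the all-informed state, (B) at $M$ follows immediately from part 2 of \Cref{Proposition-UniquePrices}. Hence the heart of the argument is establishing (A) at each $M$ using (B) at every state with strictly fewer uninformed players.

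To prove (A) at state $M$, I argue by contradiction, supposing that $(i,j,M)$ is the unique active link at $M$. Let $V^\ast \equiv \lim_{\Delta \to 0} V$. A brief backward sub-induction that starts from the terminal all-informed state (where continuation values are trivially zero) and uses (B) at all later states gives $V^\ast(a, M') = 0$ for every informed agent $a$ at every successor state $M' \supsetneq M$. Writing out the recursion of \Cref{Definition-RationalExpectations} for a competing informed seller $i'' \in M \setminus \{i\}$, the sales term vanishes and inactive-link transitions leave the state unchanged; the coefficient of $V(i'', M)$ collapses to $\rho$ in the limit (using the identity $m(n-m) + R(m) = n(n-1)/2 = 1/\rho$), so $V^\ast(i'', M) = V^\ast(i'', M\cup\{j\}) = 0$. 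The analogous computation for the uninformed buyer $j$ gives $V^\ast(j, M) = v - p^\ast$, where $p^\ast \equiv \lim p(i,j,M)$, and part 3 of \Cref{Proposition-UniquePrices} pins $p^\ast = wv > 0$. But Nash Bargaining inactivity of $(i'', j, M)$ requires, in the limit,
\begin{equation*}
V^\ast(i'', M) + V^\ast(j, M) \ge v + V^\ast(i'', M \cup \{j\}) + V^\ast(j, M \cup \{j\}),
\end{equation*}
which reduces to $p^\ast \le 0$---contradicting $p^\ast = wv > 0$. The residual case of zero active links at $M$ is ruled out the same way: if no trade ever occurs then all $V^\ast$'s vanish, and Nash Bargaining would then force trade because $v \ge 0$.

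The main obstacle is coordinating the two tracks: (A) at $M$ needs (B) at strictly-later states (to make informed successor values zero), while (B) at $M$ needs (A) at $M$ and at every intermediate state. This is why the induction must be conducted on $k$ from the bottom up, with (A) and (B) proved in tandem. A secondary technical subtlety is that the equilibrium active-link pattern may vary with $\Delta$; to take limits cleanly I would first extract a subsequence $\Delta_n \to 0$ along which the active-link configuration at every state is constant, which is possible because the state and link spaces are finite.
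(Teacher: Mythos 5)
Your proposal is correct and follows essentially the same route as the paper: induction on the number of uninformed players, showing at least two active links in every state with $|\Infmd|\geq 2$ by contradicting a unique-active-link configuration via Part 3 of \Cref{Proposition-UniquePrices} (which forces the price to $wv$) against the Nash Bargaining inactivity condition for a second informed seller (which forces the price to be nonpositive). Your two added refinements---the explicit recursion showing $V^\ast(i'',M)=0$ for the non-trading seller, and the subsequence extraction to fix the active-link configuration as $\Delta\to 0$---are points the paper treats more tersely, but they do not change the argument.
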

\begin{proof}
Given \Cref{Proposition-UniquePrices}, it suffices to show that for any equilibrium, if $|\Infmd|=m \geq 2$, then there must be more than one active link. The proof proceeds by induction on the number of uninformed agents. \smallskip
%We show that regardless of the set of active links, $\alpha(m)$, if there are two informed agents in the network then limit prices must be zero. Let $\NumActiveLinksSimp{}=\sum_{ij\in\Infmd\times\Uninfmd}\alpha_{ij}(m)$ and $\NumActiveLinksSimp{b}=\sum_{i\in\Infmd}\alpha_{ib}(m)$. The proof proceeds by induction on the number of uninformed agents.
%\paragraph{Base Case}

\noindent\textbf{Base Case:}
Consider the base case where there is a single uninformed agent, $b$, and therefore, $\Infmd=A\backslash\{b\}$. 
%If $v = 0$, \Cref{eqn:priceeqn} shows that the price equals $0$. 
%Now, consider the case of $v > 0$. 
Suppose that buyer faces a trading opportunity with an informed agent. Because, $V(i, A) = 0$ for every $i \in A$, it follows that the joint surplus from information trade is $v$. The highest possible joint surplus without information trade is $\delta v < v$. Therefore, in every equilibrium, $\alpha(i,b,A\backslash\{b\})=A_I$ for every $i\in A\backslash\{b\}$. Hence, there are at least two active links, and so \Cref{Proposition-UniquePrices} implies that prices converge to $0$. 
\smallskip

\noindent\textbf{Inductive Step:}
Now, suppose that whenever $n-m \leq \ell$, there is more than one active link. Consider the case where $n-m = \ell + 1$. Suppose towards a contradiction that there is at most one active link. In equilibrium, it cannot be that there are zero active links because this would imply disagreement payoffs of $0$, in which case trading between any pair of informed and uninformed agents increases joint surplus. Now, suppose that there is one active link $sb$ in state $M$. It follows that 
\begin{align*}
\lim_{\Delta \rightarrow 0} p(s, b, M) = w v +  w \lim_{\Delta \rightarrow 0} V(b, M \cup \{ b \}) - (1 - w) \lim_{\Delta \rightarrow 0} V(s, M \cup \{ b \}) = w v,
\end{align*}
where the first equality follows from Part 3 of \Cref{Proposition-UniquePrices} and the second equality follows from the inductive hypothesis. 
%If $v = 0$, this directly shows that the price converges to $0$. Now, we suppose that $v > 0$. 
Because there are at least two informed agents in state $M$ (by assumption), there exists an informed agent $s' \neq s$ for whom $\alpha(s', b, M) \neq A_I$. For this to be consistent with Nash Bargaining,
\begin{align*}
\delta  V(s', M \cup \{ b  \} ) + v + \delta  V(b, M \cup \{ b \}) \leq \delta  V(s', M) + \delta  V(b, M).
\end{align*}
By the inductive hypothesis, the LHS converges to $v$ as $\Delta\rightarrow 0$. Therefore, taking the limit of the above inequality, $\lim_{\Delta \rightarrow 0} [ V(s', M) + V(b, M) ] \geq v$. Moreover, $\lim_{\Delta \rightarrow 0} V(s', M) = \lim_{\Delta \rightarrow 0} V(s', M \cup \{ b \}) = 0$, because $s'$ does not trade in state $M$. Thus, $\lim_{\Delta \rightarrow 0} V(b, M) \geq v$. However, $\lim_{\Delta \rightarrow 0} V(b, M) = v - \lim_{\Delta \rightarrow 0} p(s, b, M) = (1 - w) v < v$, leading to a contradiction.
\end{proof}

\Cref{thm:twoinfprice} implies that once there are at least two possible sellers, competition forces prices to converge to $0$. This Bertrand-like outcome emerges even though all trading occurs in bilateral meetings, with small frictions from delay, and where pricing can be personalized. We note that this argument is more general than our model; indeed, it applies even if agents have heterogeneous link-specific bargaining weights as well as different values for information.\footnote{We conjecture that this result may apply also to non-Markovian equilibria. The intuition is, we think, that once there is a single buyer and multiple sellers, prices converge to $0$ across all equilibria, and therefore, by induction, prices vanish once there are $2$ or more informed agents. } 

When are prices strictly positive and bounded away from $0$? A corollary of \Cref{thm:twoinfprice} is that this can happen only if (a) an information seller is a monopolist, and (b) she has not yet sold information to any buyer. Thus, prices are strictly positive only on the first sale. The result below constructs an equilibrium that attains this upper-bound: the single seller of information, $s$, designates a single buyer as the first to whom she would sell information, disagreeing with every other buyer before she does so. Such behavior garners her a limit payoff of $w v$ and as we show below, is consistent with Nash Bargaining. Her payoff is a $1/n_B$ fraction of what she achieves were resale prohibited, and thus, when facing a large group of buyers, resale possibilities severely limit her capability to appropriate a significant fraction of social surplus. This result generalizes what we illustrated in \Cref{Section-OptimalEqmExample}. 

\begin{proposition}\label{prop:endogenous_bottleneck}
Suppose that there is a single seller of information, $s$. There exists $\overline{\Delta}>0$ such that for all $\Delta<\overline\Delta$, and for every buyer $b$, there exists an equilibrium where the seller only sells information \emph{first} to buyer $b$, disagreeing with all other buyers until she does so. After she trades information with that buyer, the equilibrium features immediate agreement. As $\Delta \rightarrow 0$, the price charged to $b$ converges to $w v$ and all other prices converge to $0$.
\end{proposition}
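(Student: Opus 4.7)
My plan is to construct a candidate trading function $\alpha^*$ that implements the desired behavior, generate prices and the value function $V$ via \Cref{Proposition-UniquePrices}, and then verify state by state that $\alpha^*$ is Nash-consistent once $\Delta$ is small. Define $\alpha^*(s, b, \{s\}) = 1$ and $\alpha^*(s, b'', \{s\}) = 0$ for every $b'' \in B \setminus \{b\}$, and set $\alpha^*(s', b', M) = 1$ for every $(s', b', M) \in \mathcal{T}$ with $|M| \geq 2$. In words, the seller initially trades only with the designated $b$, and immediate agreement prevails thereafter.

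Because the set $\{M : |M| \geq 2\}$ is closed under transitions and $\alpha^*$ restricted to it coincides with the immediate-agreement rule of \Cref{prop:immediate_agreement}, the induced value function and prices on this set are exactly those of that equilibrium, so Nash Bargaining is automatic there. Each such state has multiple active links, so Part 2 of \Cref{Proposition-UniquePrices} yields $\lim_{\Delta \to 0} p(s', b', M) = 0$. Consequently, $V(s', M) \to 0$ for every informed $s' \in M$ and $V(b', M) \to v$ for every uninformed $b' \notin M$ whenever $|M| \geq 2$.

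It remains to verify Nash Bargaining in state $\{s\}$. For the pair $(s, b)$, noting that the $n_B - 1$ inactive $(s, b'')$ links together with the $R(1)$ redundant links account for probability $1 - \rho$ of no state change, the Bellman recursions yield the exact identity
\begin{align*}
[v + \delta V(s, \{s,b\}) + \delta V(b, \{s,b\})] - [\delta V(s, \{s\}) + \delta V(b, \{s\})] = \frac{(1-\delta)(v + \delta W)}{1 - (1-\rho)\delta},
\end{align*}
where $W \equiv V(s, \{s,b\}) + V(b, \{s,b\}) \geq 0$. This is strictly positive for every $\Delta > 0$, so trade at $(s, b)$ is optimal; Part 3 of \Cref{Proposition-UniquePrices} then gives $\lim p(s, b, \{s\}) = wv$, since $V(s, \{s,b\})$ and $V(b, \{s,b\})$ both vanish. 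For any pair $(s, b'')$ with $b'' \neq b$, I verify disagreement via a limit argument: $V(s, \{s\}) \to wv$ and $V(b'', \{s\}) \to v$ (the latter because $b''$ eventually buys information essentially for free once $b$ is informed and prices in $\{s, b\}$ collapse), while $V(s, \{s, b''\})$ and $V(b'', \{s, b''\})$ both tend to $0$. Hence the disagreement surplus minus the agreement surplus converges to $wv + v - v = wv > 0$, producing a threshold $\overline{\Delta} > 0$ below which $\alpha^*(s, b'', \{s\}) = 0$ is consistent with Nash Bargaining.

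The hard part is the state $\{s\}$ itself: the agreement and disagreement joint surpluses for $(s, b)$ both converge to $v$, so a limit comparison alone is inconclusive. The cancellation in the Bellman identity above—where the price $p^*$ drops out of the joint disagreement surplus, leaving a clean $(1 - \delta)$-order gap—sidesteps this difficulty and certifies agreement for every $\Delta > 0$. Once that is in hand, the disagreement check for $(s, b'')$ requires only the limit inequality, and the price limits follow from the two relevant parts of \Cref{Proposition-UniquePrices}.
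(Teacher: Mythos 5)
Your proposal is correct and follows essentially the same route as the paper: construct the trading function with $b$ as the designated first buyer, inherit the immediate-agreement equilibrium on states with $|M|\geq 2$, apply Parts 2 and 3 of \Cref{Proposition-UniquePrices} for the price limits, and verify disagreement with each $b''\neq b$ by the limit comparison (agreement surplus $\to v$ versus disagreement surplus $\to (1+w)v$). Your only departure is that you make the agreement check for the pair $(s,b)$ fully explicit via the exact identity showing the surplus gap equals $\frac{(1-\delta)(v+\delta W)}{1-(1-\rho)\delta}>0$, where the paper simply asserts that with a single active link disagreement ``merely delays payoffs''; your computation is a correct formalization of that same observation.
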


\begin{proof}

\autoref{prop:immediate_agreement} assures that, from any starting state and for small enough $\Delta$, there exists an equilibrium with immediate agreement. Therefore, all we need to construct is behavior in the initial state $M^0 = \{s\}$. Consider a trading function $\alpha$ such that for some buyer $b$, $\alpha(s,b,M^0)=A_I$ and for all other buyers $b'$, $\alpha(s,b',M^0)=D$. Thus,
\begin{align*}
	\lim_{\Delta\rightarrow 0} p(s,b,M^0) & = w v+ w \lim_{\Delta\rightarrow 0}V(b,M^0\cup\{b\}) - (1 - w) \lim_{\Delta\rightarrow 0} V(s,M^0\cup\{b\}) = w v,
\end{align*}
where the first equality follows from Part 3 of \cref{Proposition-UniquePrices}, and the second follows from \cref{thm:twoinfprice}. To prove that this is an equilibrium, we have to show that the trading function $\alpha$ specified above is consistent with Nash Bargaining. 

Since $sb$ is the only active link in $M^0$, the joint disagreement surplus is less than the joint agreement surplus because disagreement merely delays payoffs. Therefore, it follows that $\alpha(s, b, M^0) = A_I$ (recall that agreement without information transfer is identical to disagreement in the Markovian setting). Consider interactions between $s$ and any buyer $b' \neq b$. Disagreement is consistent with equilibrium as long as the following inequality holds:
\begin{equation}\label{Inequality-Disagreement}
\delta  V(s, M^{0} \cup \{ b' \}) + v + \delta  V(b', M^{0} \cup \{ b' \}) \leq \delta  V(s, M^{0}) + \delta V(b', M^{0}).
\end{equation}
\Cref{thm:twoinfprice} implies that as $\Delta \rightarrow 0$, $V(s, M^{0} \cup \{ b' \}) \rightarrow 0$ and $V(b', M^{0} \cup \{ b' \}) \rightarrow 0$. Because of the initial trading price, $V(s, M^{0}) \rightarrow w v$. Since $b'$ does not trade until after $b$ has become informed and thus pays nothing, $V(b', M^{0}) \rightarrow v$. Hence, as $\Delta\rightarrow 0$, the LHS converges to $v$, and the RHS converges to $(1 + w) v$, so \eqref{Inequality-Disagreement} holds for sufficiently low values of $\Delta$.  
\end{proof}
%
%These equilibria are appealing because they allow the initial seller to leverage her monopolistic position to endogenously adjust the market structure with the benefit of extracting positive payoff in the frictionless limit. It is straightforward to show that this type of equilibrium maximizes the seller's payoff in the frictionless limit:
%
%\begin{corollary}\label{prop:maxexpectpayoff}
%Let $\pi$ be the maximum of $V(s, S)$ across all equilibria. Then, as $\Delta \rightarrow 0$, $\pi \rightarrow \frac{v}{2}$.
%\end{corollary}
%
%
%\begin{proof}
%By \Cref{thm:twoinfprice}, in any state with more than two informed agents the price will converge to $0$ in any equilibrium. The same holds for any state with only one informed if the equilibrium initially has more than one active link. A trading function $\alpha$ with only one initial active link must yield positive limit profit for only the first trade. This uniquely pins down the limit profit as $\frac{v}{2}$, which is achieved by the endogenous bottleneck equilibrium from \Cref{prop:endogenous_bottleneck}.
%\end{proof}
%
%To this point we have shown that in the above model there are multiple equilibria low levels of friction (low $\Delta$). This helps us understand the features and incentives for a seller to want to be the first informed. 

% \end{document}

%\input{constraints}

% \documentclass[main.tex]{subfiles}
% \begin{document}
% !TEX root = main.tex
\section{The Prepay Solution}\label{Section-Prepay}

Having shown that a monopolistic seller appropriates very little of the social surplus from information, we turn to a non-contractual avenue for her to do better. We consider a  simple non-Markovian equilibrium, which we describe as the \emph{prepay equilibrium}, that gives the seller approximately $n_Bwv$ in the frictionless limit. This payoff coincides with that she obtains if buyers were prohibited from reselling information, and therefore completely solves the resale problem.

To summarize this equilibrium, we allow the state variable to also include past payments. A generic state is now $(M,K)$ where $M$ still refers to the set of agents with the information good and $K$ refers to the set of buyers who have given a monetary transfer to the initial seller. All history dependence is summarized by this state, so now $\alpha: G \times \mathcal{M} \times 2^{B} \rightarrow \{ D, A_M, A_I \}$, $p: G \times \mathcal{M} \times 2^{B} \rightarrow \Re$, and $V: A \times \mathcal{M} \times 2^{B} \rightarrow \Re$.

Let us describe the trading decisions of this prepay equilibrium. For any state $(M, K)$ with $M = \{ s \}$ and $|K| < n_{B}-1$,  
	\begin{align*}
\alpha(i, j, M, K)= \begin{cases}
                       A_M &\text{ if } i=s,j\in B\backslash K, \\
                        D & \text{ otherwise}.
                    \end{cases}\end{align*}
That is, trades occur only between the initial seller and buyers who have not yet prepaid, and trades involve only monetary transfers. Once $n_B-1$ prepayments are made, there is no further prepayment. The information good is then sold exclusively to the buyer who has not prepaid: for any $K$ such that $|K|=n_B-1$, 
	\begin{align*}
\alpha(i, j, \{s\}, K)= \begin{cases}
                       A_I &\text{ if } i=s,j\in B\backslash K, \\
                        D & \text{ otherwise}.
                    \end{cases}\end{align*}
If the seller faces a buyer who has already prepaid, that trading opportunity ends in disagreement. After the sale of information to the buyer $j$ who has not prepaid---and more generally for any state where information is possessed by at least two agents--- or in any state where all buyers have prepaid, we transition to the immediate agreement equilibrium where every trading opportunity between an informed and an uninformed agent results in information trade: if $M\neq \{s\}$ or $|K| = n_{B}$,
\begin{align*}
\alpha(i, j, M, K)= \begin{cases}
                       A_I &\text{ if } i\in M,j\notin M, \\
                        D & \text{ otherwise}.
                    \end{cases}\end{align*}
This fully describes the trading decisions. Note that monetary transfers ($\alpha = A_M$) between two buyers or between two sellers does not change the state, so \Cref{Equation-AgreementDecisions} implies that this is never jointly better than disagreement, and \Cref{Equation-pricefunction} shows that this must be a zero transfer, anyway.
%For any other state, the market for tokens shuts down, and there is immediate agreement for the information good: $\alpha(i, j, k, M, K) = \mathbbm{1} \{ i \in M, j \notin M, k = 2 \}$. 
The following result proves existence of this prepay equilibrium, its consistency with Nash Bargaining, and derives the seller's payoff in the frequent-offer limit. 
\begin{proposition}\label{prop:prepay}
Suppose that there is a single seller of information, $s$. There exists $\overline{\Delta}>0$ such that for all $\Delta<\overline\Delta$, the prepay equilibrium exists. Moreover, as $\Delta \rightarrow 0$, the prices paid to the initial seller in the prepay equilibrium all converge to $w v$, while all other prices converge to $0$.
\end{proposition}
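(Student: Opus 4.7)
The approach is backward induction on the state space, exploiting the symmetry of the prepay trading function across buyers in $B \setminus K$. I start by noting that in any state $(M,K)$ with $M \neq \{s\}$ the prepay trading function prescribes immediate agreement for the information good, which is a valid continuation by \Cref{prop:immediate_agreement}. \Cref{thm:twoinfprice} then guarantees that all prices in such states vanish as $\Delta \to 0$, so that $V(i,M,K) \to 0$ for every $i \in M$ and $V(j,M,K) \to v$ for every $j \notin M$. These limits provide the terminal conditions for the induction and let me treat states of the form $(\{s\},K)$ in isolation.

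I next analyze the state $(\{s\},K)$ with $|K| = n_B - 1$, where the only active link is between $s$ and the lone buyer $j^* \notin K$. Part 3 of \Cref{Proposition-UniquePrices}, combined with the terminal conditions, gives $\lim p(s,j^*,2,\{s\},K) = wv$, and direct manipulation of the flow-value equations yields $V(s,\{s\},K) \to wv$, $V(j^*,\{s\},K) \to (1-w)v$, and $V(b',\{s\},K) \to v$ for every $b' \in K$, since each such $b'$ eventually buys information at a vanishing price once $j^*$ becomes informed.

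I then induct backward on $m = |K|$ for $m < n_B - 1$. Because the trading function is symmetric across $j \in B \setminus K$ and \Cref{Proposition-UniquePrices} pins down prices uniquely given $\alpha$, there is a common token price $p^T_m$ and a common value $V^*(m) \equiv V(j,\{s\},K)$ for $j \in B \setminus K$. In the limit, the seller's flow-value equation gives $V(s,\{s\},K) - V(s,\{s\},K \cup \{j\}) = p^T_m$, so her Nash-Bargaining change in surplus from the current token trade vanishes; the Nash-Bargaining pricing condition then collapses to $p^T_m = v - V^*(m)$. The buyer's flow-value equation in the limit reads $(n_B - m) V^*(m) = v - p^T_m + (n_B - m - 1) V^*(m+1)$. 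Combining the two relations forces $p^T_m = p^T_{m+1}$, and the base case $V^*(n_B - 1) = (1-w)v$ propagates backward to $p^T_m \to wv$ for every $m < n_B - 1$, giving $V(s,\{s\},\emptyset) \to n_B w v$.

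Finally I verify that the prepay trading function is Nash-Bargaining-consistent. For active trades the joint surplus with trade either strictly exceeds or, in the limit, ties the disagreement surplus; in the tied cases (both the information sale at $|K| = n_B - 1$ and each token sale, where both sides of the surplus comparison converge to $(n_B - |K| - 1)wv + v$) the pre-limit wedge from $\delta < 1$ lifts the inequality strictly for small $\Delta$. For inactive trades I verify (a) that $s$ prefers disagreement with any $b' \in K$ over a premature information sale, which follows because $V(s,\{s\},K) \to (n_B - |K|)wv$ strictly exceeds the $v$ obtainable from such a sale, and (b) that in states with $|K| < n_B - 1$ a token trade yields strictly higher joint surplus than an information sale to the same buyer, since $(n_B - |K| - 1)wv + v > v$. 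The main obstacle is precisely the tied-limit cases at the active trades: I must extract $O(\Delta)$ strict inequalities directly from the flow-value equations rather than rely on limit arguments alone, and this quantitative step delivers the threshold $\overline\Delta$ stated in the proposition.
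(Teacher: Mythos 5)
Your proposal follows essentially the same route as the paper's proof: backward induction on $|K|$ with the $|K|=n_B-1$ state handled by the logic of \Cref{prop:endogenous_bottleneck}, immediate agreement (hence vanishing prices and the terminal values $V\to 0$ for informed, $V\to v$ for uninformed) once $M\neq\{s\}$, symmetric token prices converging to $wv$, and a final consistency check for the active and inactive links. Your limit computation of the token price is actually cleaner than the paper's: you reduce it to the scalar recursion $p^T_m=v-V^*(m)$ together with $(n_B-m)V^*(m)=v-p^T_m+(n_B-m-1)V^*(m+1)$, which forces $p^T_m=p^T_{m+1}$ and anchors at $V^*(n_B-1)=(1-w)v$, whereas the paper grinds through the matrix system $\Psi(M,K)\vec p(M,K)=\frac{1}{1-\delta\hat\rho}\vec\kappa(M,K)$. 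What the matrix argument buys, however, is existence and uniqueness of the price vector at each fixed $\Delta>0$: you invoke \Cref{Proposition-UniquePrices} for this, but that proposition is stated for the token-free model, so strictly you need to redo the M-matrix/semipositivity step for the token state space (it is mechanical, but it is not literally covered by the cited result).

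The one substantive gap is the step you yourself flag and defer: the pre-limit verification that the active trades are surplus-improving when the agreement and disagreement surpluses tie in the limit at $(n_B-|K|-1)wv+v$. Asserting that ``the wedge from $\delta<1$ lifts the inequality'' is not a proof, and in fact no strict inequality is needed --- \Cref{Definition-NashBargaining} prescribes trade under a weak inequality. The paper's device is to derive, directly from Rational Expectations, the bound $\delta\left[V^{s}(|K|)-V^{s}(|K|+1)\right]\leq p(|K|)$ (valid for every $\Delta$ once $p(|K|)\geq 0$, which holds for small $\Delta$ by continuity), and then to run this through the Nash Bargaining price formula to obtain exactly the weak joint-surplus inequality $\delta\left[V^{b+}(|K|+1)+V^{s}(|K|+1)\right]\geq\delta\left[V^{b-}(|K|)+V^{s}(|K|)\right]$. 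You should carry out this chain rather than leave it as an ``obstacle.'' Two smaller omissions: you do not check the trades the prepay profile shuts down that do not change the state at all --- tokens sold to buyers already in $K$ and second-hand token sales --- which are consistent with disagreement precisely because the state (hence the surplus) is unchanged; and you do not address the joint trade of token plus information good to the same buyer, which the paper notes is payoff-equivalent to selling only the information good and hence also dominated.
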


The proof of \Cref{prop:prepay} is in the Appendix. Its logic combines that of Nash Bargaining for a scarce good with that of \Cref{prop:endogenous_bottleneck}. Once all prepayments are made, the remaining buyer pays the bilateral bargaining price for information ($\approx wv$) for the reasons described in \Cref{prop:endogenous_bottleneck}, since the continuation behavior is then identical to that of \Cref{Section-BoundingPricesAcrossEquilibria}. After this buyer buys information, competition in the resale market ensures that the price of information charged to all other buyers converges to $0$, as outlined in \Cref{Proposition-TwoInformedPlayers,prop:endogenous_bottleneck}. 
Prepaying therefore puts a buyer in the position to buy information in the (near-)future at a price of $\approx 0$, generating a price discount of approximately $wv$ relative to the buyer who does not prepay; thus, the equilibrium value of prepayment is approximately $wv$. We now argue that because only $n_B-1$ buyers can prepay, the price of prepayment converges to $wv$. 

To see why, suppose the seller meets a buyer who has not yet prepaid, but there still is an opportunity to prepay. On the buyer's end, the gain from prepaying now is that he can guarantee that he can purchase information for $\approx 0$; if he delays, there is a chance that the next time he meets the seller, all other buyers prepay and he is left to pay $wv$ for information. By contrast, for the seller, the only gain from collecting a prepayment now versus waiting is the cost of delay, which vanishes in the frequent-offer limit. For these gains to remain magnitudes of the same proportional size, it must be that in the frequent-offer limit, the buyer's gain from prepayment also vanishes, and hence, the prepayment converges to $wv$. 

The above outlines the on-path equilibrium logic for how the seller extracts $wv$ from each buyer. She must also not sell information too early (before she has collected all of the prepayments) nor sell information initially to a buyer who has already prepaid. We show in the proof that neither issue arises.\footnote{The logic in each case is straightforward. Selling the information good early (before $|K| = n_{B}-1$) is inconsistent with Nash bargaining, because the market for prepayment would shut down, eliminating a source of extra surplus for the monopolist seller. Moreover, selling the information good initially to a buyer who has already prepaid is inconsistent with Nash bargaining, because that buyer would not be willing to pay anything in the frictionless limit, as they could simply wait until a buyer who has not prepaid (or will not prepay) trades for the information good.}

Let us interpret this prepay equilibrium. Observe that the seller obtains the same value that she would if there were full intellectual property protection, without the buyers committing not to resell information. Equivalently, this payoff coincides with the seller's payoff if the trading network were a star centered on the seller, which Theorem 1 of \cite{polanski2007decentralized} shows is the seller-optimal network. A strategic logic from the prior literature \citep{polanski2007decentralized,manea2021bottleneck} is that \emph{bottlenecks} in Markovian equilibria allow the seller to capture surplus; what benefits the seller is that for each buyer, there is a single path along which information can pass from the initial seller. Our prepay equilibrium effectively creates these bottlenecks: since prepayment can be done only to the seller, it is non-replicable, and the buyer who does not do any prepayment effectively faces a bottleneck for information. 
%Hence, our equilibrium replicates the behavior from a star network in a complete network by creating the requisite bottlenecks.

Our construction solves the commitment problem by introducing a modicum of history dependence. We view the history dependence here to be distinct from that of repeated games---we note that the setting here is not that of a pure repeated game---or analogous reputational constructions \citep{ausubel1989reputation} in bargaining environments. The role of history dependence is not to punish the seller
%, as in a carrot and stick scheme, 
because she does not have a myopic incentive to deviate from the prepay equilibrium. Rather, each buyer may have a myopic incentive not to prepay, and the prepayment scheme addresses it by leveraging market competition: if he chooses not to prepay, a buyer has to pay the full price for the information good, whereas by prepaying, he obtains information cheaply on the resale market. We observe that equilibrium behavior in this prepayment scheme is not being sustained by the threat of triggering off-path punishments.
%Rather, the prepayments leverage the market competition already inherent in bargaining when there is a short side of the market: it guarantees that buyers are willing to prepay even though they anticipate that information can be purchased cheaply on the resale market. 
Thus, one way to see this result is that this mild form of history-dependence allows the seller to not merely sidestep the resale issue but to instead exploit it.\footnote{This raises the question whether other equilibria could do even better than the commitment benchmark of full intellectual property protection. Answering this question is challenging because it would require a full non-cooperative treatment. As our interest is in identifying how intellectual property protection is obviated by a mild form of history dependence, we view this intriguing question to be beyond the scope of this study.} One may also wonder whether creating a second iteration of prepayment---allowing buyers to prepay for the opportunity to prepay for information---results in even higher payoffs for the seller. Our intuition is that this is not so: as the gains from prepaying versus buying information vanish as $\Delta\rightarrow 0$, buyers would not be willing to pay more than an amount vanishing in $\Delta$ for the right to prepay.

We view the logic of this prepay equilibrium to apply beyond this model. So as to construct an immediate agreement equilibrium, we assume that bargaining weights do not vary across the network and that the value of information is identical across agents; that said, we conjecture that an immediate agreement equilibrium exists more generally. Whenever an immediate agreement equilibrium exists, our main results follow: \Cref{Proposition-TwoInformedPlayers} implies that the price converges to $0$ once there are two or more informed parties in any Markovian equilibrium, \Cref{prop:endogenous_bottleneck} would imply that the seller can appropriate surplus from at most one buyer in a Markovian equilibrium, and \Cref{prop:prepay} would imply that the seller can obtain the commitment benchmark by using a prepayment scheme.

One question about the realism of this solution is that buyers are assumed to  observe the full history, including how many other buyers have prepaid. In some contexts, we believe that blockchains can be used to alleviate this problem: the information good seller would reveal their public address, and the first $n_{B} - 1$ buyers would send a one-way transfer of $w v$ to the seller, publicly recorded on the blockchain. Each buyer (and the seller) can tell if they are one of the first $n_{B} - 1$ buyers by first checking the blockchain for transactions involving that particular seller public address. Then, the $n_{B}$'th buyer can initiate a transaction with the seller, paying $w v$ in exchange for the information good. Finally, the information good is then exchanged freely for price $0$ in any fashion, i.e., these do not need to be blockchain transactions.\footnote{Alternatively, smart contracts could be used as a part of the prepay equilibrium to automate the information good release process, as triggered by the transfers.} In this implementation, the seller cannot benefit from deviating to trade the information good offline (i.e., in a way that is not recorded on the blockchain) nor from faking additional transactions (e.g., by posing as a legitimate buyer), because in the prepay equilibrium, each buyer expects to pay exactly $w v$, regardless of how many other buyers have already transacted.\footnote{One might wonder if smart contracts, which use blockchains to automatically implement contracts without interfacing with the legal system,  provide a straightforward contractual solution to the information good trading problem by forbidding resale. Once a buyer has the information good, there is nothing to stop him from engaging in offline transactions with that information good, and \Cref{Proposition-TwoInformedPlayers} still applies. These transactions are almost certainly not observable to the smart contract, which prevents penalty clauses from being executed.}

\section{Conclusion}

This paper studies how the possibility for resale influences the pricing of information. Prices converge to $0$ across Markovian equilibria as soon as two agents possess information. In such equilibria, a monopolistic seller obtains at most a share of one buyer's value of information. Commitment problems inherent in replicability and resale impede the monopolistic seller to do any better even if she has substantial market power, can personalize prices, and exploit (slight) search frictions. 

These commitment problems can be used to craft an antidote: there is a (non-Markovian) prepay equilibrium where the seller's payoff is approximately equal to that of perfect intellectual property protection. Prepaying allows a buyer to purchase information at a negligible price, once it has been replicated and resold. But by allowing only $n_B-1$ buyers to prepay, the seller appropriates each buyer's value of prepayment. 

\appendix

\section{Appendix}\label{app:Proofs}

\begin{proof}[{Proof of \Cref{Proposition-UniquePrices} on page \pageref{Proposition-UniquePrices}}]

First, note that the existence of unique prices for inactive links is trivial. The state does not change after bargaining occurs on these links. Therefore, \Cref{Equation-pricefunction} is solved only by prices equal to $0$. The rest of the proof considers only active links.

For a trading decision function $\alpha$, let
\begin{align*}
B(i, M) &\equiv \{ j \in A \setminus M : \alpha(i, j, M) = A_I \},\\
S(j, M) &\equiv \{ i \in M : \alpha(i, j, M) = A_I \}.	
\end{align*}
When the set of informed agents is $M$, $B(i,M)$ is the set of buyers who trade information with seller $i$, and $S(j,M)$ is the set of sellers who trade information with buyer $j$. 
Let $\mathcal{L}(M) \equiv \{ ij | \alpha(i, j, M) = A_I \}$ denote the set of active links and $\mathcal{L}^{c}(M)$ denote its complement. Finally, let $\hat{\rho}(\delta,M) \equiv \frac{\rho}{1 - \delta  \rho  | \mathcal{L}^{c}(M) |}$, where recall that $\rho\equiv \frac{2}{n(n-1)}$ is the probability that a specific link is recognized. To economize on notation, we suppress the arguments of $\hat\rho$ but note that in the frequent offer limit, $\hat\rho(\delta,M)\rightarrow \frac{1}{|\mathcal{L}(M)|}$. 

Our proof follows by induction. \bigskip

\noindent\underline{Base Case:} Suppose that $|A \setminus M|=1$, with the remaining uninformed buyer being agent $j$. For any $i \in S(j, M)$, Nash Bargaining implies that 
\begin{align*}
& \frac{(1 - w)}{w}  \underbrace{p(i, j, M) \left( 1 - \delta  \hat{\rho} \right) }_{\text{Seller }i's\text{ gain from trade}}  =  \underbrace{v - p(i, j, M) - \delta  \hat{\rho}  \left( | \mathcal{L}(M) |  v - \sum_{i' \in S(j, M)} p(i', j, M) \right)}_{\text{Buyer $j$'s gain from trade}}, 
\end{align*}
where recall that $w$ is the seller's bargaining weight. This equation can be re-written as 
\begin{align*}
p(i, j, M) - \frac{w  \delta \hat{\rho}}{1 - \delta  \hat{\rho}}  \sum_{i' \in S(j, M) \setminus \{ i \}} p(i', j, M) = \frac{w  (1 - \delta  \hat{\rho}  | \mathcal{L}(M) |)}{1 - \delta  \hat{\rho}}  v.
\end{align*}
An analogous equation holds for any other $i' \in S(j, M)$, and subtracting that equation from the one above implies that $p(i, j, M) = p(i', j, M)$, so prices are symmetric. Therefore, for every $ij \in \mathcal{L}(M)$,
\begin{align*}
p(i, j, M) = \frac{w (1 - \delta  \hat{\rho} | \mathcal{L}(M) |)}{1 - \delta  \hat{\rho}  ( w | \mathcal{L}(M) | + (1 - w) )}  v,
\end{align*}
which proves that a unique solution exists. Note that if $|\mathcal{L}(M)|=1$, then $p(i,j,M)=wv$. If $\mathcal{L}(M)>1$, then $\lim_{\Delta\rightarrow 0} p(i,j,M)=0$.\footnote{As $\Delta\rightarrow 0$, $\delta  \hat{\rho} | \mathcal{L}(M) | \rightarrow 1$, which implies that the numerator converges to $0$. The denominator converges to $(1-w)(1-(|L(M)|)^{-1})$, which is strictly positive for $|L(M)|\geq 2$.} %Note that as $\Delta \rightarrow 0, \delta  \hat{\rho} | \mathcal{L}(M) | \rightarrow 1$, which implies that the numerator converges to $0$. Thus, if $| \mathcal{L}(M) | > 1$, $p(i, j, M) \rightarrow 0$.\todo{For this last line, is there an issue if $\mathcal{L}(M)=1$? Do we mean $\geq$?}

\bigskip

\noindent\underline{Inductive Step:} Suppose that for every $M'$ with $|M'| \geq x$, prices $p(i, j, M')$ are uniquely determined for every $ij \in \mathcal{L}(M')$, and if $|\mathcal{L}(M')|\geq 2$ these prices all converge to $0$ as $\Delta \rightarrow 0$. Prices being uniquely determined implies that for each agent $k$, $V(k, M')$ is also unique (by Rational Expectations and the inductive hypothesis). We first argue that prices in $M$ with $|M|=x-1$ must also exist and be unique. 

For any $ij \in \mathcal{L}(M)$ (where we adopt the convention that the first agent listed in the link is always the informed agent: $i \in M$), Nash Bargaining implies that
\begin{multline}\label{Equation-RNBUnique}
 (1 - w)  \left[ p(i, j, M) + \delta  V(i, M\cup\{j\}) - \delta  V(i, M) \right] \\
 = w  \left[ v - p(i, j, M) + \delta  V(j, M\cup\{j\}) - \delta  V(j, M) \right].
\end{multline}
%
% Erik: I think this way of writing it is excessively verbose.
%
%By Rational Expectations, we can write
%
%\begin{align}\label{Equation-REUnique}
%\begin{aligned}
%V(i, M) & = \hat{\rho}(\delta,m) \cdot \left[ \sum_{j' \in A \setminus M} p(i, j', M) + \delta \cdot m \cdot \sum_{j' \in A \setminus M} V(i, M\cup\{j'\}) \right] \\
%V(j, M) & = \hat{\rho}(\delta,m) \cdot \left[ \sum_{i' \in M} [v - p(i', j, M)] + \delta \cdot m \cdot \sum_{j' \in A \setminus M} V(j, M\cup\{j'\}) \right]
%\end{aligned}
%\end{align}
%
Using Rational Expectations to expand the value functions in \eqref{Equation-RNBUnique}, we obtain 
\begin{align*}
& (1 - w)  \left[ p(i, j, M) + \delta  V(i, M\cup\{j\}) - \delta  \hat{\rho}  \left[ \sum_{j' \in B(i, M)} p(i, j', M) + \delta  \sum_{i'j' \in \mathcal{L}(M)} V(i, M\cup\{j'\}) \right] \right] \\
& = w  \Biggl[ v - p(i, j, M) + \delta  V(j, M\cup\{j\})  - \delta  \hat{\rho}  \left[ \sum_{i' \in S(j, M)} [v - p(i', j, M)] + \delta  \sum_{i'j' \in \mathcal{L}(M)} V(j, M\cup\{j'\}) \right] \Biggr].
\end{align*}
We collect all terms that involve prices in state $M$ on the LHS and others on the RHS. Define 
\begin{align*}
\kappa(i, j, M) & \equiv w  (1 - \delta  \hat{\rho}  | S(j, M) |)  v +  w  \delta  \left[ V(j, M\cup\{j\}) - \hat{\rho}  \delta  \sum_{i'j' \in \mathcal{L}(M)} V(j, M\cup\{j'\}) \right] \\
& - (1 - w)  \delta  \left[ V(i, M\cup\{j\}) - \hat{\rho}  \delta  \sum_{i'j' \in \mathcal{L}(M)} V(i, M\cup\{j'\}) \right].
\end{align*}
Then it follows that
\begin{align*}
 & [1 - \delta  \hat{\rho}]  p(i, j, M) - (1 - w)  \delta  \hat{\rho}  \sum_{j' \in  B(i, M) \setminus \{ j \}} p(i, j', M) - w  \delta  \hat{\rho}  \sum_{i' \in S(j, M) \setminus \{ i \}} p(i', j, M) = \kappa(i, j, M). 
\end{align*}
Note that $\kappa(i, j, M)$ is uniquely determined in equilibrium since it depends only on parameters and future continuation values, which are uniquely determined. Similar equations hold for every active link between a buyer and a seller in state $M$. 

We index these links as $1,\ldots, | \mathcal{L}(M) |$ (the specific indexing is unimportant) and define an $| \mathcal{L}(M) | \times | \mathcal{L}(M) |$ matrix $\Phi(M)$ where $\Phi_{uv}(M)=w$ if link $u$ and link $v$ are distinct active links and share a common buyer,  $\Phi_{uv}(M)= 1 - w$ if link $u$ and link $v$ are distinct active links and share a common seller, and $\Phi_{uv}(M) = 0$ otherwise. Combining the $| \mathcal{L}(M) |$ equations yields the following matrix equation:
\begin{align}
\label{eqn:priceeqn}
& \left[ I_{| \mathcal{L}(M) |} - \frac{\delta  \hat{\rho}}{1 - \delta  \hat{\rho}}  \Phi({\Infmd}) \right] \vec{p}(M) = \frac{1}{1 - \delta  \hat{\rho}}  \vec{\kappa}(M)
\end{align}
Here, $\vec{p}(M)$ and $\vec{\kappa}(M)$ are $| \mathcal{L}(M) | \times 1$ vectors consisting of all of the state $M$ prices and values of $\kappa$. Note that the prices for the active links form a self-contained system, as activity on other links does not change the state. Since the right hand side of this last equation is unique, the current state price vector $\vec{p}(M)$ is unique if matrix $\Psi(M) \equiv \left[ I_{| \mathcal{L}(M) |} - \frac{\delta  \hat{\rho}}{1 - \delta  \hat{\rho}}  \Phi({\Infmd}) \right]$ is invertible.

Note that $\Psi({\Infmd})$ is a Z-matrix because the off-diagonal elements are all non-positive. Additionally, we can show that $\Psi(M)$ exhibits semipositivity; that is, there exists a vector $\vec{x}>0$ such that $\Psi(M)\vec{x}>0$.\footnote{This is shown by using for $\vec{x}$ a vector of all ones and noting that for all $u$, 
\begin{align*}
[\Psi(M)\vec{x}]_{u} & = 1 - [ (1 - w) | B(i, M) | + w | S(j, M) | - 1 ] \frac{\delta  \hat{\rho}}{1 - \delta  \hat{\rho}} > 1 - (| \mathcal{L}(M) | - 1 )  \frac{\frac{\rho}{1 - \rho  | \mathcal{L}^{c}(M) |}}{1 - \frac{\rho}{1 - \rho  | \mathcal{L}^{c}(M) |}} \\
& = 1 - \frac{\rho  ( | \mathcal{L}(M) | - 1)}{1 - \rho  [ | \mathcal{L}^{c}(M) | + 1 ]} = 0.
\end{align*}}
Being a Z-matrix that exhibits semipositivity is equivalent to $\Psi(M)$ being a non-singular M-matrix and thus invertible \citep{plemmons1977m}. This completes the proof of existence and uniqueness of the price vector.

To show that prices converge to $0$ whenever more than one link is active in the current and all subsequent states (the $M \cup M'$ in the statement of the proposition), consider the values of $\kappa(i, j, M)$. From the inductive hypothesis, every informed agent's continuation payoff on a subsequent state converges to $0$ (they capture nothing from resale). Moreover, every buyer's continuation payoff on a subsequent state converges to $v$ (they pay nothing to buy, but they capture nothing from resale). Also, note that $\hat{\rho} \rightarrow \frac{1}{| \mathcal{L}(M)|}$. This implies that
\begin{align*}
\kappa(i, j, M) & \rightarrow w  \left( 1 - \frac{| S(j, M) |}{| \mathcal{L}(M) |} \right)  v -  w \left[ \frac{1}{| \mathcal{L}(M) |}  \sum_{i'j' \in \mathcal{L}(M), j' \neq j} v \right] = 0
\end{align*}
Since the solution price vector is unique, and the right hand side of the price equation $\rightarrow 0$ (when $| \mathcal{L}(M) | > 1$), the solution price vector must also $\rightarrow 0$. 

When $| \mathcal{L}(M) | = 1$, the right hand side of \Cref{eqn:priceeqn} takes on an indeterminate form in the limit. In this case, $I_{| \mathcal{L}(M) |} = 1$. By the definition of matrix $\Phi(M)$, the only non-zero entries correspond to pairs of distinct active links, of which there are none, so $\Phi(M) = 0$. Substituting in the expression for $\kappa(i, j, M)$ and taking the limit $\Delta\rightarrow 0$, 
\begin{align*}
\lim_{\Delta \rightarrow 0} p(i, j, M) &= \lim_{\Delta \rightarrow 0} \frac{\kappa(i, j, M)}{1 - \delta \hat{\rho}} =  w v + w \lim_{\Delta \rightarrow 0} V(j, M \cup \{ j \}) - (1 - w) \lim_{\Delta \rightarrow 0} V(i, M \cup \{ j \}) 
\end{align*}
\end{proof}

\begin{proof}[{Proof of \cref{prop:immediate_agreement} on page \pageref{prop:immediate_agreement}}]

We define ``immediate agreement'' as $\alpha(i, j, M) = A_I$ for all $(i, j, M)$ for which either $i \in M, j \notin M$ or $i \notin M, j \in M$ (although we adopt the convention that $i$ is the informed agent).
We first prove that for every $\Delta$, immediate agreement prices are symmetric: $\forall M$ and $\forall i \in M, \forall j \in A \setminus M, p(i, j, M) = p(m)$, i.e., prices depend only on the size of the informed set. We establish this claim by induction. The base case ($m = n - 1$) symmetry was already proven for  \Cref{Proposition-UniquePrices}. \smallskip

\noindent\underline{Inductive Step:} Suppose that for every $M'$ with $|M'| \geq x$, prices $p(i, j, M')$ are symmetric. Since all seller/buyer links are active, we write the price equation by indexing links as $1,\ldots, [m  (n - m)]$. We use
\begin{align}\label{Equation-RedundantLinks}
    R(m)\equiv\frac{m (m - 1) + (n - m) (n - m - 1)}{2}
\end{align}
%$R(m)\equiv\frac{m (m - 1) + (n - m) (n - m - 1)}{2}$ 
to denote the number of redundant links---between two informed agents or two uninformed agents---when the number of informed agents is $m$. $\Phi(M)$ and $\Psi(M)$ are defined in the same way as in the proof of \Cref{Proposition-UniquePrices}, so the price equation is
\begin{align*}
& \Psi(M) \vec{p}(M) = \frac{1}{1 - \delta  \hat{\rho}(\delta,m)}  \vec{\kappa}(M),
\end{align*}
where $\hat{\rho}(\delta,m) \equiv \frac{\rho}{1 - \delta \rho R(m)}$.

To see that prices are symmetric, first note that price symmetry in the inductive hypothesis implies that all future buyer continuation payoffs are the same and all future seller continuation payoffs are the same, and these depend only on the number of sellers. Hence, $\kappa(i, j, M)$ does not depend on the identities of the particular buyer or seller or the precise configuration, so $\vec{\kappa}(M)$ can be written as $\vec{\kappa}(M) = \kappa(m)  \vec{1}$, where $\kappa(m)$ is some scalar, and $\vec{1}$ is the vector of all ones. With symmetric prices, there exists a scalar $p(M)$ such that $\vec{p}(M) = p(M) \vec{1}$. Then, $\Psi(M) \vec{p}(M) = p(M)  \Psi(M) \vec{1}$. Because of the symmetry of the network (in each state, every buyer is linked to the same number of sellers, and every seller is linked to the same number of buyers), every row of $\Psi(M)$ has the same sum. Moreover, this row sum depends only on the number of sellers. Hence, $\Psi(M) \vec{1} = s(m)  \vec{1}$, where $s(m)$ is the (scalar) row sum. Now, the price equation can be written as $p(M) s(m)  \vec{1} = \frac{1}{1 - \delta  \rho(m)}  \kappa(m)  \vec{1} \Leftrightarrow p(M) = \frac{1}{1 - \delta  \rho(m)} \frac{\kappa(m)}{s(m)} \equiv p(m)$. Because the pricing equation determines unique prices, $p(M)$ is this unique, symmetric price. 

We use this symmetry result to write state $M$ prices as $p(m)$, buyer continuation payoffs as $V^{b}(m)$, and seller continuation payoffs as $V^{s}(m)$.\footnote{Also, recall from the proof of  \Cref{Proposition-UniquePrices} that prices are $0$ on all inactive links.}
\begin{align}
%\begin{aligned}
V^{s}(m) & = \hat{\rho}(\delta,m)  \left[ (n-m) p(m) +  \delta m  (n-m) V^{s}(m+1) \right],\label{Equation-SellerContValue} \\
V^{b}(m) & = \hat{\rho}(\delta,m)  \left[ m [v - p(m)] +  \delta m  (n-m-1) V^{b}(m+1) + \delta m V^{s}(m+1) \right],\\
p(m) & = w (v+\delta V^{s}(m+1)-\delta V^{b}(m)) - (1 - w) (\delta V^{s}(m+1)-\delta V^{s}(m)).
%\end{aligned}
\end{align}

%*** Not totally sure what the following bit (before \Cref{lem:positiveprices}) is used for. Maybe delete? ***
%
%First, use the Rational Expectations solution for $V^{s}(m+1)$ in terms of $V^{s}(m)$, $V^{b}(m)$ and $p(m)$
%\begin{align*}
%& V^{s}(m) = \hat{\rho}(\delta,m)  \left[ (n-m) p(m) +  \delta m  (n-m) V^{s}(m+1) \right] \\
%\Leftrightarrow & \frac{1}{\delta \hat{\rho}(\delta,m) m (n - m) }  V^{s}(m)-\frac{1}{\delta }\frac{1}{m } p(m) =  V^{s}(m+1) 
%\end{align*}
We use the following lemma, which we prove at the end of the Appendix.  

\begin{lemma}\label{lem:positiveprices}
Suppose that trading decisions are immediate agreement. Then if $\delta$ is sufficiently high ($\Delta$ is sufficiently low), the Nash Bargaining prices on active links ($p(m)$) are positive for all $m$.
\end{lemma}

Using \cref{lem:positiveprices}, to show that an immediate agreement equilibrium exists for high $\delta$, it suffices to show that the joint surplus from agreement with information trade exceeds the joint surplus from disagreement (which equals the joint surplus from agreement without information trade), for all $m$. We show below that the equilibrium offers the seller a strictly positive gain from selling information, and as Nash Bargaining offers the seller a share of the total gain from selling information, this assures that the joint surplus from agreement with information trade exceeds that from disagreement.
% \begin{align*}
% v +  2\delta V^{s}(m+1) - \delta [ V^{s}(m)+ V^{b}(m) ] \geq 0.	
% \end{align*}
%As the buyer is able to resell information once it is purchased, the buyer's continuation value is $\delta V^s(m+1)$.

Observe that \Cref{Equation-SellerContValue} can be rewritten as
% \begin{align*}
%     V^{s}(m) = \hat{\rho}(\delta,m) [ (n - m) p(m) + m (n - m) \delta V^{s}(m+1) ],
% \end{align*}
% which can be re-written as
\begin{align*}
    \frac{1}{\hat{\rho}(\delta,m) (n - m)} V^{s}(m) - m \delta V^{s}(m+1) = p(m).
\end{align*}
Observe that $\hat\rho(\delta,m)<\hat\rho(1,m)$, and that by algebra, $\hat\rho(1,m)(n-m)=1/m$. Therefore, we obtain
\begin{align*}
    m [ V^{s}(m) - \delta V^{s}(m+1) ] < p(m).
\end{align*}
Since $V^s(m)\geq 0$ (as prices are non-negative), multiplying both sides by $\delta$ yields that
\begin{align}\label{Equation-mdelta}
    m\delta [ V^{s}(m) - V^{s}(m+1) ] < \delta p(m) < p(m).
\end{align}
Observe that this implies that 
\begin{align}\label{Equation-1delta}
    \delta [ V^{s}(m) - V^{s}(m+1) ] < p(m).
\end{align}
To see why, note that it suffices to consider the case where the LHS of \eqref{Equation-1delta} is non-negative, as \Cref{lem:positiveprices} already assures that $p(m)\geq 0$. In that case, $m\delta [ V^{s}(m) - V^{s}(m+1) ]\geq \delta [ V^{s}(m) - V^{s}(m+1) ]$, guaranteeing that \eqref{Equation-1delta} holds. Rewriting this implies that 
\begin{align*}
    \delta V^s(m) <p(m)+\delta V^s(m+1),
\end{align*}
and hence, the seller is strictly gaining from selling information. As Nash Bargaining gives the seller a share of the total gains from trade, those total gains from trade must also be strictly positive, completing our argument. \end{proof}

\begin{comment}
%OLD VERSION BELOW
We use the rational expectations equation to step the value function forward:
\begin{align}
& V^{s}(m) = \hat{\rho}(\delta,m) [ (n - m) p(m) + m (n - m) \delta V^{s}(m+1) ] \nonumber \\
\Leftrightarrow & V^{s}(m) -  \hat{\rho}(\delta,m)  m (n - m) \delta V^{s}(m+1) = \hat{\rho}(\delta,m) (n - m) p(m) \nonumber \\
\Leftrightarrow & \frac{1}{\hat{\rho}(\delta,m) (n - m)} V^{s}(m) - m \delta V^{s}(m+1) = p(m) \nonumber \\
\Rightarrow & m [ V^{s}(m) - \delta V^{s}(m+1) ] \leq p(m) \nonumber \\
\Rightarrow & m \delta [ V^{s}(m) - V^{s}(m+1) ] \leq p(m) \label{eqn:price_lower_bound},
\end{align}
where the last two lines follow by setting $\hat{\rho}(\delta,m) = \rho(1, m)$ and multiplying $V^{s}(m)$ by $\delta$, respectively (recall that prices $\geq 0$ implies $V^{s}(m) \geq 0$). Since $p(m) \geq 0$ by \Cref{lem:positiveprices}, we can conclude that
\begin{equation*}
\delta V^{s}(m) \leq p(m) + \delta V^{s}(m+1),
\end{equation*}
i.e., the seller gains from trade. Since Nash bargaining divides the gains from trade, the buyer must also gain from trade. So, there are joint gains from trade, as required. Since this holds for sufficiently high $\delta$, the immediate agreement equilibrium exists for sufficiently high $\delta$ (sufficiently low $\Delta$).

\end{comment}

\begin{proof}[{Proof of \cref{prop:prepay} on page \pageref{prop:prepay}}]

The equilibrium conditions are satisfied when $m > 1$ and in all off-equilibrium path histories, as the immediate agreement equilibrium always exists (\Cref{prop:immediate_agreement}). In both of these cases, note that purely monetary transfers could hypothetically continue to occur, but afterwards there is immediate agreement regardless, so there are no joint gains from these transfers and $\alpha(i, j, M, K) \neq A_M$ is consistent with Nash Bargaining. We proceed by induction on $| K |$.

As the base case, suppose $M = \{ s \}$ and $|K| = n_{B} - 1$, so that all prepayments have been collected but the information good has not yet been sold. This is analogous to the setting of \Cref{prop:endogenous_bottleneck}, so everything is consistent with equilibrium and the price converges to $wv$. As one small detail, note that $b \in B \setminus K$ will not give a monetary transfer ($\alpha = A_M$), because this only nominally changes the state relative to information trade ($\alpha = A_I$); the play going forward is immediate agreement either way, and the transfer simply delays information trade. Furthermore, the prices and continuation payoffs are trivially symmetric. That is, they depend only on the number of prepayments made thus far and in the case of a buyer's continuation payoff, whether or not the buyer has already prepaid.

%Let $\overline{V}(i)$ be agent $i$'s continuation payoff in state $M = \{ s, b_{n-1} \}$, $K = \{ 1, \ldots, n - 2 \}$. The price for the information good is
%
%\begin{align*}
%p(s, b_{k}, k, M, K) & = w ( v + \delta \overline{V}(b_{k})  - \delta V(b_{k}, M, K) ) - (1 - w) ( \delta \overline{V}(s) - \delta V(s, M, K) ) \\
%& = w ( v + \delta \overline{V}(b_{k}) - \delta \hat{\rho} [ v - p(s, b_{k}, k, M, K) +  \delta \overline{V}(b_{k}) ] ) \\
%&  - (1 - w) ( \delta \overline{V}(s) - \delta \hat{\rho} [ p(s, b_{k}, k, M, K) + \delta \overline{V}(s) ] ) \\
%& = w v + w \delta \overline{V}(b_{k}) - (1 - w) \delta \overline{V}(s)
%\end{align*}
%
%Since the subsequent play is immediate agreement, both of those continuation payoffs converge to zero, so $p(s, b_{k}, k, M, K) \rightarrow w v$, as desired. Futhermore, the prices and continuation payoffs are trivially symmetric. That is, they depend only on the number of items sold thus far and whether or not the buyer has already bought their designated item. Finally, by an argument similar to that found in the proof of \Cref{prop:endogenous_bottleneck}, the trading function $\alpha$ is consistent with equilibrium in this state.

As an inductive hypothesis, suppose that for all $(M, K)$ with $M = \{ s \}$ and $|K| = x + 1$ for some $x \in \{ 0, 1, \ldots, n_{B} - 2 \}$, the equilibrium conditions are met (for low $\Delta$), limit prices are as in the statement of the Proposition, and prices and continuation payoffs are symmetric as previously described. Suppose that $M = \{ s \}$ and $|K| = x$. 

First, consider the possibility of monetary transfers involving buyers $j \in K$. If the transfer were to take place, the state would not change, and thus the joint surplus from trade and the joint surplus from disagreement are equal. Thus, the equilibrium conditions are satisfied by $\alpha(i, j, M, K) \neq A_M$. The same argument also justifies no monetary transfers to $i \neq s$ from $j \in B \setminus K$. Since the history does not record this buyer as having transferred money to $s$, the state does not advance, so there are no joint gains from trade. We will later show that no information trade is also consistent with equilibrium.

Next, we consider monetary transfers from buyers $j \in B \setminus K$ to seller $s$. We need to redefine what is meant by an active link, which differs for the setting with monetary transfers in equilibrium. Let $\mathcal{L}(M, K) \equiv \{ ij | \alpha(i, j, M, K) \neq D \}$ denote the set of active links and $\mathcal{L}^{c}(M, K)$ denote its complement. We then define $\hat{\rho}$ analogously to the definition in the proof of \Cref{Proposition-UniquePrices}, noting that the active links in the states currently being considered are those between $s$ and buyers $j \in B \setminus K$. The price is 
\begin{align*}
 p(s, j, M, K) &= w\delta \left[  V(j, M, K \cup \{ j \}) -  V(j, M, K ) \right] - (1 - w)\delta \left[ V(s, M, K \cup \{ j \}) -  V(s, M, K ) \right],
 \end{align*}
 which can be re-written as
 \begin{align*}
 p(s, j, M, K) =& w\delta \left(  V(j, M, K \cup \{ j \}) -  \hat{\rho} \left[ - p(s, j, M, K) + \sum_{j' \in B \setminus K} \delta V(j, M, K \cup \{ j' \}) \right] \right) \\
& - (1 - w)\delta \left(  V(s, M, K \cup \{ j \})  -  \hat{\rho} \sum_{j' \in B \setminus K} [ p(s, j', M, K) + \delta V(s, M, K \cup \{ j' \}) ] \right).
\end{align*}
Re-arranging terms yields
\begin{align}
& (1 - \delta \hat{\rho}) p(s, j, M, K) - \delta \hat{\rho} (1 - w) \sum_{j' \in B \setminus ( K \cup \{ j \})} p(s, j', M, K) \nonumber \\& = w \Biggl[ (1 - \delta \hat{\rho}) \delta V(j, M, K \cup \{ j \}) - \delta \hat{\rho} \sum_{j' \in B \setminus ( K \cup \{ j \})} \delta V(j, M, K \cup \{ j' \}) \Biggr] \nonumber \\
& - (1 - w) \Biggl[ (1 - \delta \hat{\rho}) \delta V(s, M, K \cup \{ j \}) - \delta \hat{\rho} \sum_{j' \in B \setminus ( K \cup \{ j \})} \delta V(s, M, K \cup \{ j' \}) \Biggr] \label{eqn:tokenprice} 
\end{align}
We can write this more compactly as
\begin{align*}
\Psi(M, K) \vec{p}(M, K) = \frac{1}{1 - \delta \hat{\rho}} \vec{\kappa}(M, K) 
\end{align*}
where $\Psi(M, K)$ is the $[ n_{B} - | K | ] \times [ n_{B} - | K | ]$ matrix with $1$ along the diagonal and $-\frac{\delta \hat{\rho}}{1 - \delta \hat{\rho}} (1 - w)$ everywhere else, $\vec{p}(M, K)$ is the $[ n_{B} - | K | ]$-vector of prices in the current state, and $\vec{\kappa}(M, K)$ is the $[ n_{B} - | K | ]$-vector of entries as in the right hand side of \Cref{eqn:tokenprice}. $\Psi(M)$ is a non-singular M-matrix (see the proof of \Cref{Proposition-UniquePrices} for an analogous proof). The inductive hypothesis implies that the Nash bargaining prices in the current state exist and are unique; moreover, that $\frac{1}{1 - \delta \hat{\rho}} \vec{\kappa}(M, K) \rightarrow w v - (1 - w) w v$. This implies that as $\Delta \rightarrow 0$, $p(s, j, M, K) \rightarrow w v$ for all $j  \in B \setminus K$.

Also, note that each element of $\vec{\kappa}(M, K)$ is the same by the symmetry part of the inductive hypothesis, and since the matrix $\Psi(M, K)$ has a constant row sum that depends only on $|K|$, the prices only depend on $|K|$. We will write these prices as $p(|K|)$. Moreover, since all current continuation payoffs depend only on $p(|K|)$ and future continuation payoffs (which by the inductive hypothesis only depend on $|K|$), the current continuation payoffs depend only on $|K|$, whether the agent is the seller $s$, and whether the agent is a buyer who already prepaid or not. Continuation payoffs will be denoted $V^{s}(|K|)$ for the seller, $V^{b+}(|K|)$ for a buyer who has already prepaid, and $V^{b-}(|K|)$ for a buyer who has not yet prepaid. Note that the discussion of this paragraph has been all conditional on $M = \{ s \}$, and thus $M$ is suppressed in the notation; different sets $M$ give different continuation payoffs.

We now consider the continuation payoffs at the prepayment bargaining stage and that prepayment occurs only if the joint surplus from prepayment is weakly more than that of disagreement:
\begin{align*}
 V^{s}(|K|) - \delta V^{s}(|K|+1) =& \hat{\rho} (n_{B} - |K|) [ p(|K|) + \delta V^{s}(|K|+1) ] - \delta V^{s}(|K|+1)  \\=& \hat{\rho} (n_{B} - |K|) p(|K|) - [1 - \hat{\rho} (n_{B} - |K|) ] \delta V^{s}(|K|+1) \\\leq &\hat{\rho} (n_{B} - |K|) p(|K|) \end{align*}
 which implies that $\delta [ V^{s}(|K|) - V^{s}(|K|+1) ] \leq p(|K|)$. This is equivalent to $ \delta [ V^{s}(|K|) - V^{s}(|K|+1) ] \leq w (\delta V^{b+}(|K|+1) - \delta V^{b-}(|K|)) - (1 - w) (\delta V^{s}(|K|+1)-\delta V^{s}(|K|))$, which is equivalent to $w \delta [ V^{s}(|K|) - V^{s}(|K|+1) ] + w \delta [V^{b-}(|K|) - V^{b+}(|K|+1)] \leq 0 $, and therefore, $\delta [ V^{b+}(|K|+1) + V^{s}(|K|+1) ] \geq \delta [ V^{b-}(|K|) + V^{s}(|K|) ]$, i.e., prepayment is jointly better than disagreement.\footnote{Note that the above bounding argument assumed that $p(|K|) \geq 0$, which must be true for sufficiently small $\Delta$, as prices are continuous as a function of $\Delta$.} Next, we show that disagreement is jointly better than information trade in these states.

Consider potential trade of the information good in this same inductive step (that is, when $|K| < n_{B} - 1$). For any buyer $j$, we will show that the following inequality holds for small enough $\Delta$:
\begin{align*}
& v + \delta  V(s, M \cup \{ j \}, K) + \delta  V(b, M \cup \{ j \}, K ) \leq \delta  V(s, M, K) + \delta V(j, M, K)
\end{align*}

Since selling the information good to $j$ triggers immediate agreement in the future, both $V(j, M \cup \{ j \}, K )$ and $V(s, M \cup \{ j \}, K ) \rightarrow 0$. Since the seller has at least one more prepayment to collect, $\lim_{\Delta \rightarrow 0} V(s, M, K) > wv$. At worst, the buyer will still have to pay in order to get the information good, so $\lim_{\Delta \rightarrow 0} V(j, M, K) \geq (1-w)v$. Therefore, for sufficiently low $\Delta$, the inequality holds. 

Combining these two inequalities (one for purely monetary transfers, and one for information good trades), it is clear for this pair that the purely monetary transfer is at least as good as trading nothing, which is at least as good as trading the information good. Thus, prepayments with no information good trade is consistent with equilibrium (for low enough $\Delta$). This completes the inductive step, as we have demonstrated the correct limit prices, price and continuation payoff symmetry, and trading functions that are consistent with equilibrium (for low $\Delta$).
\end{proof}

\begin{proof}[Proof of \cref{lem:positiveprices} on p. \pageref{lem:positiveprices}]
Consider all markets of at least 3 agents ($n\geq 3$), because the case with two agents is trivial. The proof proceeds by induction over the number of informed agents

\noindent\underline{Base Case:} If $m = n - 1$ then by adapting the base case pricing formula from the proof of \Cref{Proposition-UniquePrices}, we get
\begin{align*}
p(m) = \frac{w [ 1 - \delta \hat{\rho}(\delta,m) (n - 1) ]}{1 - \delta \hat{\rho}(\delta,m) [ w (n - 1) + (1 - w) ]} v > 0
\end{align*}
Note that $p(m) \rightarrow 0$ as $\delta \rightarrow 1$, because $\delta \hat{\rho}(\delta,m) = \frac{\delta \rho}{1 - \delta \rho R(m)}$ converges to the reciprocal of the number of buyer/seller links (here, $n-1$). More specifically, $p(m) \in O(1 - \delta \rho(\delta, m^{*}) m^{*} (n - m^{*}))$ as $\delta \rightarrow 1$, where $m^{*}$ maximizes $[ 1 - \delta \hat{\rho}(\delta,m) m (n - m) ]$. For simplicity of notation, define $g(\delta) \equiv [ 1 - \delta \rho(\delta, m^{*}) m^{*} (n - m^{*}) ]$. Thus, $p(m) \in O(g(\delta))$ and $V^{s}(m) \in O(g(\delta))$. Furthermore, $p(m) \notin O(g(\delta)^{2})$ and $V^{s}(m) \notin O(g(\delta)^{2})$, i.e., the order of the upper bound is tight.

\noindent\underline{Inductive Step:} Given that $p(m+1) > 0$ (or $p(m+1) = 0$ in the $w = 0$ case), $p(m+1) \in O(g(\delta))$ and $V^{s}(m+1) \in O(g(\delta))$, $p(m+1) \notin O(g(\delta)^{2})$ and $V^{s}(m+1) \notin O(g(\delta)^{2})$, we now consider the case of $p(m)$.

The proof proceeds by writing the Nash bargaining pricing equation, stepping the value functions forward, and rewriting in terms of the next step price $p(m+1)$:

From the proof of \cref{prop:immediate_agreement}, $p(m) = \frac{1}{1 - \delta  \rho(m)} \frac{\kappa(m)}{s(m)}$, so $p(m) > 0$ is equivalent to $\kappa(m) > 0$, so we will prove the latter.
\begin{align*}
\kappa(m) & = w [1-\delta \hat{\rho}(\delta,m) m]v + w \delta V^{s}(m+1) - \delta^2 w \hat{\rho}(\delta,m) m (n -m - 1)  V^{b}(m+1) \\ 
& \qquad -\delta^2 w \hat{\rho}(\delta,m) m V^{s}(m+1) - (1 - w) \delta V^{s}(m+1) + (1 - w) \delta^2 \hat{\rho}(\delta,m) m (n - m) V^{s}(m+1) \\
& =  w [1-\delta \hat{\rho}(\delta,m) m]v + \delta (2w - 1) V^{s}(m+1) \\
& \qquad -\delta^2 \hat{\rho}(\delta,m) m [ w (n - m - 1) V^{b}(m+1) + (w - (1 - w)(n-m)) V^{s}(m+1) ] \\
& =  w [1-\delta \hat{\rho}(\delta,m) m]v + \delta (2w - 1) \hat{\rho}(\delta,m) [ (n-m) p(m+1) + \delta m (n-m) V^{s}(m+2) ] \\
& \qquad -\delta^2 \hat{\rho}(\delta,m) m [ w (n - m - 1) V^{b}(m+1) + (w - (1 - w)(n-m)) V^{s}(m+1) ] \\
& =  w [1-\delta \hat{\rho}(\delta,m) m(n-m)]v + \delta (2w - 1) \hat{\rho}(\delta,m) (n-m) p(m+1) \\
& \quad - \delta^{2} (2w - 1) \hat{\rho}(\delta,m) m [ V^{s}(m+1) - V^{s}(m+2) ] \\
& \quad + \delta \hat{\rho}(\delta,m) m (n - m - 1)  [ w ( v + \delta V^{s}(m+2) - \delta V^{b}(m+1) ) \\
& \quad - (1 - w) (\delta V^{s}(m+2) - \delta V^{s}(m+1) ) ] \\
& =  w [1-\delta \hat{\rho}(\delta,m) m(n-m)]v + \delta (2w - 1) \hat{\rho}(\delta,m) (n-m) p(m+1) \\
& \quad - \delta^{2} (2w - 1) \hat{\rho}(\delta,m) m [ V^{s}(m+1) - V^{s}(m+2) ] \\
& \quad + \delta \hat{\rho}(\delta,m) m (n - m - 1)  p(m+1)
\end{align*}
We can apply the inductive hypothesis to each term, and it follows that $p(m) \in O(g(\delta))$ and $V^{s}(m) \in O(g(\delta))$, whereas $p(m) \notin O(g(\delta)^{2})$ and $V^{s}(m) \notin O(g(\delta)^{2})$ (the $v$ term alone implies these latter two). We can infer the following equivalent condition for $p(m) > 0$:
\begin{align*}
& w [1-\delta \hat{\rho}(\delta,m) m(n-m)]v + \delta (2w - 1) \hat{\rho}(\delta,m) \{ (n-m) p(m+1) \\
& \quad - m \delta [ V^{s}(m+1) - V^{s}(m+2) ] \} + \delta \hat{\rho}(\delta,m) m (n - m - 1)  p(m+1) > 0 \\
\Leftrightarrow & w [1-\delta \hat{\rho}(\delta,m) m(n-m)]v + \delta 2w  \hat{\rho}(\delta,m) \{ (n-m) p(m+1) - m \delta [ V^{s}(m+1) - V^{s}(m+2) ] \} \\
& \quad - \delta  \hat{\rho}(\delta,m) \{ (n-m) p(m+1) - m \delta [ V^{s}(m+1) - V^{s}(m+2) ] \} \\
& \quad + \delta \hat{\rho}(\delta,m) m (n - m - 1)  p(m+1) > 0
\end{align*}
Using the fact that $m \delta [ V^{s}(m+1) - V^{s}(m+2) ] \leq p(m+1)$ from \eqref{Equation-mdelta}, we can drop the second term to get a sufficient condition:
\begin{multline*}
\Leftarrow w [1-\delta \hat{\rho}(\delta,m) m(n-m)]v - \delta  \hat{\rho}(\delta,m) \{ (n-m) p(m+1) - m \delta [ V^{s}(m+1) - V^{s}(m+2) ] \} \\
+ \delta \hat{\rho}(\delta,m) m (n - m - 1)  p(m+1) > 0
\end{multline*}
Moreover, from the same derivations as leading to \eqref{Equation-mdelta}, we can equivalently substitute $m \delta [ V^{s}(m+1) - V^{s}(m+2) ] = \delta \hat{\rho}(\delta,m) m (n-m) p(m+1) - [ 1 - \delta \hat{\rho}(\delta,m) m (n-m) ] m \delta V^{s}(m+2)$:
\begin{align*}
\Leftrightarrow & w [1-\delta \hat{\rho}(\delta,m) m(n-m)]v -\delta \hat{\rho}(\delta,m) [ 1 - \delta \hat{\rho}(\delta,m) m (n-m) ] m \delta V^{s}(m+2) \\
& - \delta  \hat{\rho}(\delta,m) \{ (n-m) p(m+1) - \delta \hat{\rho}(\delta,m) m (n-m) p(m+1) \} \\
& + \delta \hat{\rho}(\delta,m) m (n - m - 1)  p(m+1) > 0 \\
\Leftrightarrow & w [1-\delta \hat{\rho}(\delta,m) m(n-m)]v -\delta \hat{\rho}(\delta,m) [ 1 - \delta \hat{\rho}(\delta,m) m (n-m) ] m \delta V^{s}(m+2) \\
& - \delta  \hat{\rho}(\delta,m) [1 - \delta \hat{\rho}(\delta,m) m (n-m)] p(m+1)  \\
& + \delta \hat{\rho}(\delta,m) (m - 1) (n - m - 1)  p(m+1) > 0
\end{align*}
The first term is $> 0$ and is $O(g(\delta))$ and not $O(g(\delta)^{2})$. The second two terms are $O(g(\delta)^{2})$. The last term is $\geq 0$ for large enough $\delta$ but may equal $0$ (if $m = 1$). Therefore, for large enough $\delta$ (small enough $\Delta$), the positive term(s) dominate and $p(m) > 0$, completing the inductive step. 

\end{proof}
%\newpage
\begin{singlespace}
    \addcontentsline{toc}{section}{References}
    \bibliographystyle{ecta}
    \bibliography{sellinginfo}
\end{singlespace} 
%\newpage

%\setcounter{page}{1}
%\input{omitproofs}

\end{document}